\def\T{{ \mathrm{\scriptscriptstyle T} }}
\def\##1\#{\begin{align}#1\end{align}}
\def\$#1\${\begin{align*}#1\end{align*}}
\newcommand{\rF}{\textnormal{F}}
\def\T{{ \mathrm{\scriptscriptstyle T} }} 
\newcommand{\Rom}[1]{\text{\uppercase\expandafter{\romannumeral #1\relax}}}
\newcommand{\scolor}[1]{{\color{black}#1}}
\newcommand{\minimize}{\mathop{\mathrm{minimize}}}
\begin{document}

\title{ \LARGE  Resistant convex clustering: How does the fusion penalty enhance resistance?}

\author{
 Qiang Sun\thanks{University of Toronto and MBZUAI; E-mail: \texttt{qsunstats@gmail.com}.}
\and Archer Gong Zhang\thanks{University of Toronto; E-mail: \texttt{archer.zhang@utoronto.ca}.}
\and Chenyu Liu\thanks{University of California, San Diego; E-mail: \texttt{chl056@ucsd.edu}.}
\and Kean Ming Tan\thanks{University of Michigan, Ann Arbor; E-mail: \texttt{keanming@umich.edu}.}
 }

\date{}

\maketitle


\begin{abstract}  
Convex clustering is a convex relaxation of the $k$-means and hierarchical clustering. It involves solving a convex optimization problem with the objective function being a squared error loss plus a fusion penalty that encourages the estimated centroids for observations in the same cluster to be identical. 
However, when data are contaminated, convex clustering with a squared error loss fails even when there is only one arbitrary outlier.
To address this challenge, we propose a resistant convex clustering method.
Theoretically, we show that the new estimator is resistant to arbitrary outliers: it does not break down until more than half of the observations are arbitrary outliers.
Perhaps surprisingly, the fusion penalty can help enhance resistance by fusing the estimators to the cluster centers of uncontaminated samples, but not the other way around. 
Numerical studies demonstrate the competitive performance of the proposed method.  
\end{abstract}
\textbf{Keywords:} Breakdown point,  fusion penalty, outliers, resistance, robustness.

\section{Introduction}


Clustering is ubiquitous in many scientific disciplines such as pattern recognition, machine learning, and bioinformatics.  
Given $n$ observations, the goal of clustering is to group the $n$ observations into $k$ clusters.
Traditional clustering algorithms such as $k$-means and hierarchical clustering take a greedy approach and are sensitive to initializations of the clusters, and the choice of distance metric and linkage, respectively \citep{ElemStatLearn,johnson2002applied}, due to their non-convex nature.

To avoid the non-convexity issue, several authors have proposed a convex formulation of the clustering problem, referred to as convex clustering \citep{pelckmans2005convex,hocking2011clusterpath,lindsten2011clustering}. 
Specifically, convex clustering solves a convex optimization problem with the cost function being  a squared error loss plus a fusion penalty that encourages the centroids  of observations in the
same cluster to be identical.
Efficient algorithms for convex clustering have been developed \citep{chi2013splitting,chen2014convex,sun2018convex,Weylandt2020dynamic}.   
Theoretical properties of convex clustering were studied \citep{NIPS2014_5307,tan2015,wang2016sparse,radchenko2014consistent,chi2018recovering}.
\citet{chi2014convex} and \cite{chi2018provable} considered extensions of convex clustering to perform co-clustering on matrices and tensors. 

Convex clustering is developed based on an inherent assumption that there are no outliers in the data.
However, in practice, large-scale data sets are often corrupted.  
Due to the use of squared error loss, a naive application of convex clustering will cluster each outlier into a singleton cluster. 
To address this issue, we propose a resistant convex clustering method by substituting the squared error loss in the convex clustering formulation with a Huber loss \citep{huber1964, huber1973}. The resulting optimization problem is convex, which we solve using an alternative direction method of multipliers algorithm.  We refer readers to  \cite{rousseeuw1984least, rousseeuw1984robust, yohai1987high, mizera1999breakdown} and \cite{salibian2002bootrapping} for classical analysis of resistant $M$-estimators in the presence of arbitrary outliers, and to   \cite{catoni2012challenging, sun2018adaptive, avella2018robust, ke2018user,tan2018robust} for nonasymptotic analysis of Huber regression with a diverging robustness parameter under heavy-tailed distributions. 

We analyze the breakdown point of the proposed resistant convex clustering method.
Informally, the breakdown point of an estimator is defined as the proportion of arbitrary outliers an estimator can tolerate before the estimator produces arbitrarily large estimates or breaks down \citep{hampel1971general}. 
We show that the proposed estimator does not break down until more than half of the observations are arbitrary outliers. 
This is perhaps rather surprising, at least to us,  as we expected one arbitrary large outlier will destroy the clustering procedure because there are as many parameters as the samples. Comparing with the estimator without the fusion penalty, we find that the fusion penalty helps enhance the resistance of the clustering procedure by fusing the estimators of the centroids to the cluster centroids of uncontaminated observations, but not the other way around. To the best of our knowledge, such phenomenon has not yet been observed in the literature. 
The R package that implements our method can be found at \url{https://github.com/statsle/Rcvxclustr}. 


\paragraph{Related work} We review related work on robust clustering methods. 
Existing robust clustering methods include 
the trimmed $k$-means \citep{garcia2010review,whang2015non} and robust mixture models \citep{peel2000robust,lin2007robust}. The trimmed $k$-means algorithm first picks an outlying proportion and then optimizes the trimmed $k$-means objective \citep{cuesta1997trimmed}, and is well developed \citep{garcia1999robustness, georgogiannis2016robust,dorabiala2022robust}.
However, the trimmed $k$-means tends to produce clusters with the same size \citep{garcia2010review}, and may fail dramatically when the clusters are unbalanced. 
The robust mixture models further mitigate the cluster unbalanced issue by explicitly modeling the marginal clustering probabilities \citep{gallegos2005robust, cuesta2008robust, yang2012robust,mclachlan2019finite}. 
However, both the trimmed $k$-means and robust mixture models are non-convex optimization problems, and thus finding the global optima is challenging.




\section{Resistant Convex Clustering}
\label{rcc}
Let $\Xb \in \RR^{n \times p}$ be a data matrix with $n$ observations and $p$ features. 
A popular variant of convex clustering estimates a centroid matrix $\Ub\in\RR^{n\times p}$ by solving the following convex optimization problem
\begin{equation}
\label{eq:ccl}
 \widehat\Ub^{\rm ls}(\lambda) = \underset{\Ub \in \RR^{n\times p}}{\mathrm{argmin}}~ \frac{1}{2} \sum_{i=1}^n \|\Xb_{i}- \Ub_{i}\|_2^2  + \lambda \sum_{i<i'}\|\Ub_{i}-\Ub_{i'} \|_2,
\end{equation}
where $\Xb_{i}$ and $\Ub_i$ are the $i$th row of $\Xb$ and $\Ub$ respectively, \scolor{ and $\lambda \geq 0$ is a tuning parameter} \citep{pelckmans2005convex,hocking2011clusterpath,lindsten2011clustering}.  When it is clear from the context, we omit $\lambda$ and write $\widehat\Ub^{\rm ls}(\lambda)$ as $\widehat\Ub^{\rm ls}$. 
In order to distinguish \eqref{eq:ccl} with its resistant version to be developed later, we refer to \eqref{eq:ccl} as the least-squares convex clustering problem. 
The fused group lasso penalty, $\|\Ub_{i}-\Ub_{i'} \|_2$, encourages the rows of $\hat{\Ub}^{\rm ls}$ to be similar to each other.
The number of unique rows in $\hat{\Ub}^{\rm ls}$ is controlled by 
the nonnegative tuning parameter $\lambda$.
The cluster assignments can be inferred based on $\hat{\Ub}^{\rm ls}$: the $i$th and $i'$th observations are estimated to belong to the same cluster if and only if $\hat{\Ub}^{\rm ls}_{i}=\hat{\Ub}^{\rm ls}_{i'}$.

Because the squared error loss is sensitive to outliers, the least-squares convex clustering often fails to identify the correct cluster memberships when data are contaminated. Indeed, we have the following informal result indicating the least-squares convex clustering is not resistant to arbitrary data contamination, with its formal version presented in Section \ref{sec:analysis}. 
\begin{theorem}[Informal Statement]
The least-squares convex clustering breaks down when there is only $1$ arbitrary bad data point. 
\end{theorem}

The above theorem states that if one single observation is adversarially contaminated to take an arbitrary value, then the least-squares convex clustering breaks down. To address this issue, we propose to substitute the squared error loss in \eqref{eq:ccl} by a loss function such that the resulting procedure is resistant to outliers. In Section \ref{sec:analysis}, we will show that the Huber loss combined with the fused group lasso penalty is resistant to outliers in terms of breakdown point analysis, where 
the Huber loss  is formally defined as \citep{huber1964}:  
\begin{equation}
\label{eq:ccl2}
	\ell_\tau(a) =
	\left\{\begin{array}{ll}
	\frac{1}{2}x^2 ,    & \mbox{if } |x | \leq \tau ,  \\
	\tau |x | - \frac{1}{2} \tau^2,   &  \mbox{if }  |x | > \tau
	\end{array}  \right.
\end{equation}
where \scolor{ $\tau>0$} is a cutoff parameter that blends the quadratic region and the linear region of the loss function. 
The Huber loss induces resistance since it grows slower,  linearly instead of quadratically,  at tails where $|x|>\tau$. 
We then propose to estimate the centroid matrix $\Ub$ by
\begin{equation}
\label{eq:ccl3.1}
\widehat \Ub = \underset{\Ub \in \RR^{n\times p}}{\mathrm{argmin}}~  \sum_{i=1}^n \ell_\tau (\Xb_{i}- \Ub_{i})  + \lambda \sum_{i<i'}\|\Ub_{i}-\Ub_{i'} \|_2,
\end{equation}
where we use the notation  $\ell_\tau (\Xb_{i}- \Ub_{i})$ to indicate $\sum_{j=1}^p \ell_\tau(X_{ij} - U_{ij})$.
Note that \eqref{eq:ccl3.1} reduces to the least-squares convex clustering  \eqref{eq:ccl} when $\tau \rightarrow \infty$.
The optimization problem \eqref{eq:ccl3.1} is convex, and thus an efficient algorithm can be developed to achieve the global optimum.

\section{Breakdown Point Analysis}
\label{sec:analysis}
In this section, we examine the breakdown point property of the least-squares convex clustering method and our proposed estimator. 
Recall that $\Xb \in \RR^{n \times p}$ is the original data that are uncontaminated. 
We define the set 
\$
\cP_m(\Xb)=\big\{\widetilde\Xb:\,  \tilde \Xb_{i}\ne  \Xb_{i},~ i\in \cI~\mathrm{such~that~} |\cI|\leq m\big\}.
\$
In other words, $\cP_m(\Xb)$ is the set of all possible contaminated data matrices that are obtained by replacing at most $m$ rows of the original data $\Xb$, which we refer to as the $m$-row-wise contamination model. 
Throughout this section, let $\tilde{\Xb}\in\cP_m(\Xb)$ be the contaminated data.
Let $\hat{\Ub}(\Xb)$ and $\hat{\Ub}(\tilde{\Xb})$ be the solutions to \eqref{eq:ccl3.1} with the original data $\Xb$ and the contaminated data $\tilde{\Xb}$, respectively.
We now provide a formal definition of the breakdown point of any estimator \citep{donoho1983notion}.

\begin{definition}
\label{def:breakdown}
 The breakdown point of an estimator  $\hat{\Wb}$ is defined as 
\[
\varepsilon^*(\widehat\Wb,  \Xb)=\min\left\{\frac{m}{n}: \sup_{\tilde\Xb\in \cP_m(\Xb)}\big\|\widehat\Wb (\tilde\Xb)-\hat{\Wb} (\Xb)\big\|_\rF=\infty\right\}.
\]
\end{definition}
The supremum is taken over all possible contaminated datasets in $\cP_m (\Xb)$.  
Thus, the quantity $\varepsilon^*(\hat{\Wb},\Xb)$ can be interpreted as the smallest proportion of contaminated samples for which the estimator $\hat{\Wb}$ produces an arbitrarily large estimate relative to $\widehat\Wb(\Xb)$.  
Our first result gives the breakdown point of the least-squares convex clustering estimator, \scolor{with its proof collected in Appendix~\ref{appendix:b0}.}


\begin{theorem}\label{thm:ls}
The breakdown point of the least-squares convex clustering estimator $\widehat \Ub^{\rm ls}(\lambda)$ is $1/n$ for any $\lambda \geq 0$. 
\end{theorem}

Theorem~\ref{thm:ls} indicates that the least-squares convex clustering is not resistant to adversarial contamination. 
In particular, if one single observation is adversarially contaminated to take an arbitrary value, then the least-squares convex clustering completely breaks down. On the contrary, our proposed method achieves a breakdown point of at least one half, which is formally presented below.   

\begin{theorem}
\label{thm:bp}
Take 
$
\tau < 
\lambda ({n-\lfloor(n+1)/2\rfloor})/{\sqrt{p}},
$ 
\scolor{ where $\lfloor \cdot \rfloor$ is the floor function.}
Then 
the resistant convex clustering estimator obtained from solving~\eqref{eq:ccl4} has a breakdown point of at least $1/2$ and at most $\lfloor(n+1)/2\rfloor/n$, 
that is
\$
\frac{1}{2}\leq \varepsilon^*(\widehat\Ub,  \Xb) \leq \frac{\lfloor (n+1)/2\rfloor}{n}. 
\$
\end{theorem}


\begin{proof}[Proof of Theorem \ref{thm:bp}]
The proof for the upper bound is standard and we collect it in the appendix for completeness. 
We only prove the lower bound here.  
Let
\begin{equation}
\label{eq:cclproof}
\cL_\tau(\Ub, \Xb)=   \sum_{i=1}^n \ell_\tau(\Xb_{i}-\Ub_{i})   + \lambda \sum_{i<i'}\|\Ub_{i}-\Ub_{i'} \|_2,
\end{equation}
where $\ell_\tau(\cdot)$ is the Huber loss defined in \eqref{eq:ccl2}.   
Recall from Definition~\ref{def:breakdown} the breakdown point of an estimator,  $\varepsilon^*(\widehat\Ub, \Xb)$.
Let $m=n\varepsilon^*(\widehat\Ub, \Xb)$. 
For every $k\in\NN$, there exists an $\tilde\Xb^k\in \cP_m(\Xb)$ such that $\|\widehat\Ub (\tilde\Xb^k)-\hat{\Ub}(\Xb)\|_{\rF}>k$, where $\hat\Ub (\tilde\Xb^k)$ and $\hat{\Ub}(\Xb)$ are estimators obtained from minimizing $\cL_\tau(\Ub, \tilde{\Xb}^k)$ and $\cL_\tau(\Ub, \Xb)$, respectively. 
Without loss of generality, we assume that the first $n-m$ samples in $\tilde\Xb^k$ are uncontaminated, i.e., $\tilde\Xb^k=(\Xb_{1}, \ldots, \Xb_{n-m},  \Yb^k_{n-m+1}, \dots, \Yb^k_{n})$, where $\Yb^k$ are the contaminated data.
For notational simplicity, we write $\Ub ^k=\widehat\Ub (\tilde\Xb^k)$. 
Moreover, we define two sets that contain indices for the good data points and contaminated data points, $\cG=\big\{1,\dots, n-m\big\}$ and $\cG^c=\big\{n-m+1,\dots, n\big\}$, respectively. 

Since $\Ub^k$ is the minimizer of $\cL_\tau(\Ub, \tilde\Xb^k)$, we have $ \cL_\tau(\Ub^k,\tilde\Xb^k)\leq \cL_\tau\big(\mathbf{0}, \tilde\Xb^k\big)$, implying 
\#\label{thm2:eq1}
\sum_{i\in \cG}\ell_\tau\big(\Xb_{i}- \Ub_i^k\big)+\sum_{i\in \cG^c}\ell_\tau\big(\Yb_{i}^k-\Ub_i^k\big)+\lambda \sum_{i<i'}\|\Ub_{i}^k-\Ub_{i'}^k\|_2 
&\leq \sum_{i\in \cG}\ell_\tau(\Xb_{i})+\sum_{i\in \cG^c}\ell_\tau(\Yb_{i}^k). 
\#
By Lemma~\ref{lemma:bound} and the symmetry of Huber loss, we obtain
\begin{equation}
\label{thm2:eq1-1}
\sum_{i\in \cG}\ell_\tau\big(\Ub_i^k\big) \le \sum_{i\in \cG} \ell_\tau\big(\Xb_i-\Ub_i^k\big)   + \sum_{i\in \cG}\ell_\tau\big(\Xb_i\big) +(n-m)p\tau^2
\end{equation}
and 
\begin{equation}
\label{thm2:eq1-2}
\sum_{i\in \cG^c}\ell_\tau\big(\Yb_i^k\big) \le \sum_{i\in \cG^c} \ell_\tau\big(\Yb_i^k-\Ub_i^k\big)   + \sum_{i\in \cG^c}\ell_\tau\big(\Ub_i^k\big) +mp\tau^2.
\end{equation}
Substituting \eqref{thm2:eq1-1} and \eqref{thm2:eq1-2} into \eqref{thm2:eq1} yields
\begin{equation}
\label{thm2:eq1-3}
\sum_{i\in \cG}\ell_\tau(\Ub_i^k)-2\sum_{i\in \cG}\ell_\tau(\Xb_{i})- \sum_{i\in \cG^c}\ell_\tau(\Ub_i^k)-np\tau^2+\lambda \sum_{i<i'}\|\Ub_{i}^k-\Ub_{i'}^k\|_2\leq 0.
\end{equation}

We now study the effect of the fused group lasso penalty on the contaminated data.
The penalty term can be rewritten as
\$
 \sum_{i<i'}\|\Ub_{i}^k-\Ub_{i'}^k\|_2
 =\sum_{i\in \cG,i
 '\in\cG^c}\big\|\Ub^k_i-\Ub^k_{i'}\big\|_2+\sum_{i<i':i,i' \in \cG} \big\|\Ub^k_i-\Ub^k_{i' }\big\|_2+\sum_{i<i': i,i' \in \cG^c} \big\|\Ub^k_i-\Ub^k_{i'}\big\|_2.
\$
By definition, as $k\rightarrow\infty$, $\sum_{i=1}^n\|\Ub_i^k\|_2\rightarrow \infty$. 
Per the compactness of the closed unit ball, we may assume that $\bm{\theta}^k_i=\Ub_i^k/\sum_{i=1}^n\|\Ub_i^k\|_2$ converges to some point $\bm{\theta}_i^0$, passing to a subsequence otherwise.

Dividing \eqref{thm2:eq1-3} by $\ell_\tau(\sum_{i=1}^n\|\Ub^k_i\|_2)$ and taking the limit when $k\rightarrow \infty$, we obtain
\#
&\liminf_{k\rightarrow \infty}\frac{\sum_{i\in\cG}\ell_\tau\left(\Ub_i^k\right)}{\ell_\tau(\sum_{i=1}^n\|\Ub^k_i\|_2)}
+ \lambda \liminf_{k\rightarrow \infty} \frac{\sum_{i\in\cG,i'\in\cG^c}\|\Ub_i^k-\Ub_{i'}^k\|_2}{\ell_\tau(\sum_{i=1}^n\|\Ub^k_i\|_2)}\notag\\
&\qquad +\lambda \liminf_{k\rightarrow \infty} \frac{\sum_{i<i': i, i' 
\in\cG\, {\rm or}\, i, i' 
\in\cG^c  }\|\Ub_i^k-\Ub_{i'}^k\|_2 }{\ell_\tau(\sum_{i=1}^n\|\Ub^k_i\|_2)} - \limsup_{k\rightarrow \infty}\frac{\sum_{i \in \cG^c}\ell_\tau\left(\Ub_i^k\right)}{\ell_\tau(\sum_{i=1}^n\|\Ub^k_i\|_2)} \leq 0.
\label{thm:normal}
\#
Dropping the third term on the left hand side and by Lemmas \ref{lemma:bound2}--\ref{lemma:bound3}, \eqref{thm:normal} reduces to
\$
\sum_{i\in \cG} \big\|\bm{\theta}_i^0\big\|_1+\frac{\lambda}{\tau}\sum_{i\in\cG, i'\in\cG^c}\big\|\bm{\theta}_i^0-\bm{\theta}_{i'}^0\big\|_2 -\sum_{i\in\cG^c}\big\|\bm{\theta}_{i}^0\big\|_1\leq 0.
\$
Using the fact that $\|\bm{z}\|_2\geq {p}^{-1/2}\|\bm{z}\|_1$ for any vector $\bm{z}\in \RR^p$, we obtain 
\begin{equation}
\label{eq:imp1}
\sum_{i\in \cG} \big\|\bm{\theta}_i^0\big\|_1+\frac{\lambda}{\tau\sqrt{p}}\sum_{i\in\cG, i'\in\cG^c}\big\|\bm{\theta}_i^0-\bm{\theta}_{i'}^0\big\|_1 -\sum_{i\in\cG^c}\big\|\bm{\theta}_{i}^0\big\|_1\leq 0.
\end{equation}

We now analyze \eqref{eq:imp1} by considering two cases. For the first case, 
by the triangle inequality $\| \btheta_i^0 - \btheta_{i'}^0\|_1 \ge \|\btheta_i^0\|_1 - \| \btheta_{i'}^0\|_1$, \eqref{eq:imp1} reduces to 
\$
\left(1+\frac{m\lambda}{\tau\sqrt{p}}\right)\sum_{i \in\cG}\big\|\bm{\theta}_i^0\big\|_1-\left(1+\frac{(n-m)\lambda}{\tau\sqrt{p}}\right)\sum_{i\in \cG^c}\big\|\bm{\theta}_i^0\big\|_1\leq 0.
\$
Simplifying the above expression yields
\begin{equation}
\label{eq:imp2}
A_{+}\sum_{i\in \cG}\big\|\bm{\theta}_i^0\big\|_1-\sum_{i\in \cG^c}\big\|\bm{\theta}_i^0\big\|_1\leq 0,\;\operatorname{where}\;A_{+}=\frac{m\lambda/(\tau\sqrt{p})+1}{(n-m)\lambda/(\tau\sqrt{p})+1}.
\end{equation}
For the second case, we use the triangle inequality 
$\| \btheta_i^0 - \btheta_{i'}^0\|_1 \ge \|\btheta_{i'}^0\|_1 - \| \btheta_{i}^0\|_1$. 
Following a similar calculation, we obtain
\$
\left(\frac{(n-m)\lambda}{\tau\sqrt{p}}-1\right)\sum_{i\in \cG^c}\big\|\bm{\theta}_i^0\big\|_1-\left(\frac{m\lambda}{\tau\sqrt{p}}-1\right)\sum_{i\in \cG}\big\|\bm{\theta}_i^0\big\|_1\leq 0.
\$
The above inequality can be simplified to
\begin{equation}
\label{eq:imp3}
\sum_{i\in \cG^c}\big\|\bm{\theta}_i^0\big\|_1-A_{-}\sum_{i\in \cG}\big\|\bm{\theta}_i^0\big\|_1\leq 0,\;\operatorname{where}\;A_{-}=\frac{m\lambda/(\tau\sqrt{p})-1}{(n-m)\lambda/(\tau\sqrt{p})-1},
\end{equation}
provided that the tuning parameters $\tau$ and $\lambda$ are chosen such that ${(n-m)\lambda}>{\tau\sqrt{p}}$.

Combining \eqref{eq:imp2} and \eqref{eq:imp3}, we obtain 
\$
A_{+}\sum_{i\in\cG}\big\|\bm{\theta}_i^0\big\|_1-A_{-}\sum_{i\in \cG}\big\|\bm{\theta}_i^0\big\|_1\leq 0
\$
provided ${(n-m)\lambda}>{\tau\sqrt{p}}$. 
Now if $\sum_{i\in \cG}\|\bm{\theta}_{i}^{0}\|_1\neq0$, we immediately have $A_{+}\leq A_{-}$, which further implies
\$
\frac{m}{n}\geq 1/2. 
\$
If $\sum_{i\in\cG}\|\bm{\theta}_{i}^{0}\|_1=0$, by \eqref{eq:imp3}, we must have $\sum_{i \in \cG^c}\|\bm{\theta}_{i}^{0}\|_1=0$. 
However, this contradicts the fact that $\sum_{i=1}^n\|\bm{\theta}^0_i\|_2=1$ by construction. 

Therefore, given $\tau/\lambda\leq  \frac{n-m}{\sqrt{p}}$,  we obtain 
\#\label{bdp:upper}
\frac{1}{2}\leq \frac{m}{n}\leq 1-\frac{\tau \sqrt{p}}{n\lambda}. 
\#
The  statement follows by observing $m=n\varepsilon^*(\widehat\Ub, \Xb)$. 
\end{proof}

The theorem above implies that as long as $\tau$ is not too large, our proposed resistant convex clustering has a breakdown point of at least $1/2$.    Let us define the Huber regression estimator without the fusion penalty as 
\$
\widehat\Ub^{\rm wo} = \underset{\Ub \in \RR^{n\times p}}{\mathrm{min}}~ \sum_{i=1}^n \ell_\tau(\Xb_i-\Ub_i).   
\$ 

Our last result states that $\widehat\Ub^{\rm wo}$ has a breakdown point of at most $1/n$,   \scolor{ with the  proof provided in Appendix~\ref{appendix:c}.}

\begin{proposition}\label{prop:wo_penalty}
The breakdown point of the Huber regression estimator without the fusion penalty $\widehat\Ub^{\rm wo}$ is  $1/n$.  
\end{proposition}

Perhaps surprisingly, at least to us, the two results above suggest that the fusion penalty helps to improve the resistance of the Huber estimator by fusing it to the empirical cluster centers of uncontaminated data points, but not the other way around.  Specifically, the fusion penalty helps enhance the resistantness property by improving the breakdown point from  $1/n$ for the Huber estimator to at least  $1/2$ for the resistant convex clustering estimator.   On the contrary, when the loss function is the least-squares loss, which is not as robust as the Huber loss, the fusion penalty does not help to improve resistance; see Theorem \ref{thm:ls}.   


\section{Implementation}
This section develops an algorithm for solving a general version of \eqref{eq:ccl3.1}: 
\#\label{eq:ccl3}
\underset{\Ub \in \RR^{n\times p}}{\mathrm{minimize}}~  \sum_{i=1}^n \ell_\tau (\Xb_{i}- \Ub_{i})  + \lambda \sum_{i<i'}w_{ii'}\|\Ub_{i}-\Ub_{i'} \|_2,
\#
where we allow general \scolor{ nonnegative weights $w_{ii'} \geq 0$}. Since \eqref{eq:ccl3} is a convex optimization problem, we solve \eqref{eq:ccl3} using an alternating direction method of multipliers algorithm \scolor{ (ADMM)} \citep{boyd2004convex}. 
Our algorithm is a modified version of that of \citet{chi2013splitting} to accommodate the Huber loss. 
The main idea is to decouple the terms in \eqref{eq:ccl3} that are difficult to optimize jointly.  
Let $\Vb$ be an ${n \choose 2} \times p$ matrix.  With some abuse of notation, let $\Vb_{ii'}$ be the row of $\Vb$ corresponding to the pair of indices $(i,i')$.
We recast \eqref{eq:ccl3} as the following equivalent constrained problem
\begin{equation}\label{eq:ccl4}
\begin{split}
&\minimize_{\Ub,\Wb\in \RR^{n\times p},\Vb\in \RR^{{n\choose 2}\times p}}~ 
~\sum_{i=1}^n \ell_\tau (\Xb_{i}- \Wb_{i})  + \lambda \sum_{i<i'}w_{ii'}\|\Vb_{ii'} \|_2\\
&\mathrm{subject~to}~~~ \Ub_{i} = \Wb_{i}, ~~~\Ub_{i} -\Ub_{i'} = \Vb_{ii'}, ~~ \forall~ i<i'.
\end{split}
\end{equation}
Construct an ${n\choose 2}\times n$ matrix $\Eb$ such that $(\Eb \Ub)_{ii'} = \Ub_i-\Ub_{i'}$.
Then, it can be shown that the scaled augmented Lagrangian function for \eqref{eq:ccl4} takes the form
\begin{align*}
L_{\tau}(\Wb,\Vb,\Ub,\Yb,\Zb)&=
\sum_{i=1}^n \ell_\tau (\Xb_{i}- \Wb_{i})  + \lambda \sum_{i<i'}w_{ii'}\|\Vb_{ii'} \|_2\\
&\qquad + \frac{\rho}{2} \|\Vb  - \Eb\Ub+ \Yb    \|_{\rF}^2+ \frac{\rho}{2} \|\Wb  - \Ub + \Zb    \|_{\rF}^2,
\end{align*}
where $\Wb$, $\Vb$, $\Ub$ are the primal variables, $\Yb$ and $\Zb$ are the dual variables, $\rho$ is a nonnegative tuning parameter for the ADMM algorithm, and $\|\cdot\|_{\rF}$ is the Frobenius norm. 
The updates on both the primal and dual variables can be derived by minimizing the scaled augmented Lagrangian function $L_{\tau}(\Wb,\Vb,\Ub,\Yb,\Zb)$. 

\scolor{We use the ADMM algorithm to solve the above problem over  an increasing sequence of $\lambda$ values until  all data points are clustered into one single cluster.  To enhance computational efficiency, we employ a warm-start strategy, initializing the ADMM algorithm for a new $\lambda$ with the solution obtained from the preceding $\lambda$.
}
Algorithm~\ref{Alg:huberadmm} summarizes routine  for solving \eqref{eq:ccl4}. A detailed derivation of the ADMM updates is deferred to Appendix \ref{app:1}.

\begin{algorithm}[H]
\small
\caption{An alternating direction method of multipliers algorithm.}
\label{Alg:huberadmm}
\begin{enumerate}
\item Input the starting value $\lambda^{(0)}>0$ and the step size $\alpha>1$ of the tuning parameter, resistantification parameter $\tau$, tolerance level $\epsilon$, and $\rho$.
\item  Initialize the primal variables $\Ub^{(0)}$, $\Vb^{(0)} $, $\Wb^{(0)}$, and dual variables $\Yb^{(0)}$ and $\Zb^{(0)}$.
 
\item  Iterate until the stopping criterion:
\begin{itemize}
\item \scolor{ Iterate until convergence:} 
\begin{enumerate}
\item $\Ub^{(t)}=(\Eb^\T\Eb+\Ib)^{-1}[\Eb^{\T}(\Vb^{(t-1)}+\Yb^{(t-1)})+(\Wb^{(t-1)}+\Zb^{(t-1)})]$.
\item For each element in $\Wb^{(t)}$: 
\[
W_{ij}^{(t)} =\begin{cases} \frac{X_{ij}+\rho(U^{(t)}_{ij}-Z^{(t-1)}_{ij})}{1+\rho}, & \mathrm{if} ~ \left|\frac{\rho[X_{ij}-(U^{(t)}_{ij}-Z^{(t-1)}_{ij})]}{1+\rho}\right|\leq \tau,\\
X_{ij}-S(X_{ij}-(U^{(t)}_{ij}-Z^{(t-1)}_{ij}),\tau/{\rho}), & \mathrm{otherwise},
\end{cases}
\]
where $S(a,b) = \text{sign}(a) \max( |a|-b, 0)$ is the soft-thresholding operator.
\item For all $i<i'$, let $\eb_{ii'}^{(t)} = \Ub^{(t)}_{i}-\Ub^{(t)}_{i'}-\Yb^{(t-1)}_{ii'}$ and set
\begin{equation*}
\Vb^{(t)}_{ii'}= \left[1-\frac{\lambda^{(t)} w_{ii'}}{\rho \|\eb_{ii'}^{(t)}\|_2}\right]_+ \eb_{ii'}^{(t)},
\end{equation*}
where $[ a]_+ = \max(0,a)$. 
\item For all $i<i'$, $\Yb_{ii'}^{(t)}= \Yb_{ii'}^{(t-1)}-\rho (\Ub^{(t)}_{i}-\Ub^{(t)}_{i'}-\Vb^{(t)}_{ii'})$.
\item $\Zb^{(t)}=\Zb^{(t-1)}-\rho (\Ub^{(t)}-\Wb^{(t)})$.
\end{enumerate}

\item \scolor{ $\lambda^{(l+1)}=\alpha\lambda^{(l)}$.} 
\end{itemize}
\end{enumerate}
\end{algorithm}

\section{Numerical Studies}
\label{section:numericalstudies}
We examine the performance of the proposed estimator in the presence of arbitrary outliers. We also consider heavy-tailed random noise. In the case of arbitrary outliers, we consider both row-wise contamination models and entry-wise contamination models by randomly contaminating a proportion of rows, or randomly contaminating a proportion of entries in the data matrix. 

We compare the proposed method, abbreviated as proposed, to the least-squares convex clustering~\eqref{eq:ccl}, implemented using the \texttt{R} package \texttt{cvxclustr} \citep{cvxclustr} and thus abbreviated as cvxclustr.
\scolor{In all of our simulation studies, we present the results with specific $\tau$ values, which are provided in the figure captions or tables. 
We start by experimenting with $\tau$ values from a grid with an increment of 0.1, and progressively refine the grid until satisfactory performance is achieved. The results are reported for the best-performing $\tau$ that we identify.
For both methods, we perform multiple iterations for each $\lambda$ and gradually increase $\lambda$ until the optimal solution classifies all data points into a single cluster.
} 
We report the best solution obtained along the path of increasing $\lambda$'s, measured by Hubert and Arabie's adjusted Rand index (HA Rand index) \citep{Rand1971,hubert1985comparing}, \scolor{ whose definition can be found in the appendix.}  A value that is close to one indicates good agreement between the true and estimated clusters.


We implement our proposed method and the least-squares convex clustering method using uniform weight ($w_{ii'} = 1$ for all entries).  
\scolor{We also consider two other competing methods: $k$-means clustering and hierarchical clustering, assuming the true number of clusters is known.}
Additionally, we assume that $n$ observations come from two distinct non-overlapping clusters, $C_1$ and $C_2$, and each cluster contains half of the data points. For different sample sizes, each cluster still contains half of the data.


We generate an $n\times p$ data matrix $\Xb$ according to the model $\Xb_i = \Ub_1 + \bm{\varepsilon}_{i}$ if $i \in C_1$, and $\Xb_i = \Ub_2 + \bm{\varepsilon}_i$, otherwise. The population centroids for the two clusters are constructed as $\Ub_1 \sim \mathcal{N}_p(\mathbf{0},\Ib)$ and $\Ub_2 \sim \mathcal{N}_p((\mathbf{3}_{p/2},\mathbf{-3}_{p/2})^\T,\Ib)$, where $\mathbf{3}_{p/2}$ is a $p/2$-dimensional vector of threes. The random noise vectors $\bvarepsilon_i$'s are independently and identically distributed (i.i.d.) as $\mathcal N({\bf 0},\Ib)$. For row-wise adversarial contamination, we randomly select a proportion of observations (rows of $\Xb$), and in each observation we replace 20\% features with $\mathcal U(10,20)$ random noises. For entry-wise adversarial contamination, we randomly select a certain proportion of entries from the data matrix $\Xb$, and replace them with $\mathcal U(10,20)$ random noises. For heavy-tailed noise, entries of $\bvarepsilon_i$'s are i.i.d. generated according to the centered $t$ distributions with various degrees of freedom and non-centrality parameters to be 0. 
\scolor{
Finally, we examine a different outlier pattern by generating data with Gaussian noise $\mathcal N({\bf 0},\Ib)$ and contamination points drawn from a $t$ distribution (referred to as $t$-outliers), where the degree of freedom is 1 and the non-centrality parameter is 0.
Details on hyper-parameters are provided in Appendix~\ref{hyper}. On a general-purpose laptop with 10 cores and 32GB of memory, our method using uniform weights with $p=20$ and $\tau=0.1$ completes in less than a second for $n=10$, approximately 23 seconds for $n=100$, and 3.5 minutes for $n=200$. Additionally, computation time decreases as $\tau$ decreases. Using Gaussian kernel weights, $w_{ii’}=\exp(-\phi|\mathbf X_i-\mathbf X_{i’}|_2^2)$, with $p=20$, $\tau=0.01$, and $\phi=0.001$, our method takes less than a second for $n=10$, around 26 seconds for $n=100$, and 5 minutes for $n=200$. Increasing $\phi$ results in longer computation times. However, computation time does not significantly increase as $p$ grows.
}

Figures~\ref{entry-wise-1}--\ref{entry-wise-3} summarize the results for entry-wise contamination with those for the row-wise contamination deferred to the appendix. 
\scolor{The values are averaged over 200 replications with different random seeds.
Specifically, Figure~\ref{entry-wise-1}(a) is for $2\%$ entry-wise contamination where we fix the feature dimension $p=20$ but vary the sample size $n$ from 10 to 210 by an increment of 50. 
The left panel displays the HA Rand index, while the right panel presents the estimation error $\|\hat\Ub - \hat\Ub_{\rm clean}\|_{\rF}$, representing the Frobenius norm of the difference between the centroid estimators for contaminated and clean data.
Figure~\ref{entry-wise-1}(b) is for $2\%$ entry-wise contamination where we fix the sample size $n=40$ but vary the feature dimension from 10 to 50 by an increment of 10. 
Figures~\ref{entry-wise-1}(c) and \ref{entry-wise-1}(d) are the same as \ref{entry-wise-1}(a) and \ref{entry-wise-1}(b) except that the entry-wise contamination rate is increased to $10\%$. 
Figures~\ref{entry-wise-2}(a) and \ref{entry-wise-2}(b) illustrate the performance under entry-wise contamination and $t$-noise, where the degree of freedom is 5, for $(n,p) = (20, 10)$ and $(n,p) = (40,20)$, respectively. The contamination proportion is varied from 0.0 to 0.1 in increments of 0.025.
Figures~\ref{entry-wise-2}(c) and (d) are for $t$-noises with varying degrees of freedom, where $n=20,p=10$ and $n=40,p=20$ respectively. 
Figures~\ref{entry-wise-3}(a), \ref{entry-wise-3}(b), and \ref{entry-wise-3}(c) are for Gaussian noises with $t$-outliers  with 1 degree of freedom and varying sample sizes, feature dimensions, and entry-wise contamination proportions, respectively.
In all cases, our proposed method outperforms least-squares convex clustering in terms of the HA Rand index and yields the smallest centroid estimation error among all evaluated methods. We also observe that while the proposed method generally achieves a satisfactory HA Rand index, it is occasionally outperformed by $k$-means. 
This may be due to the fact that $k$-means is implemented with prior knowledge of the true number of clusters. 
For heavy-tailed outliers generated from $t$ distribution with 1 degree of freedom, our proposed method significantly outperforms the other competing methods in terms of estimation error.
In addition,  Figures~\ref{entry-wise-1} and \ref{entry-wise-3} show that the estimation error increases as the sample size $n$ or feature dimension increases. 
This is due to the increasing dimensions of the estimated centroids $\hat\Ub$ and $\hat\Ub_{\rm clean}$, which would generally cause the estimation error $\|\hat\Ub - \hat\Ub_{\rm clean}\|_{\rF}$ to increase. 
Finally, Figures~\ref{entry-wise-2}--\ref{entry-wise-3} indicate that all  methods tend to perform worse, both in terms of the HA Rand index and estimation error, as the level of data contamination increases.
}

To examine the resistance of our proposed method visually, we follow the entry-wise contamination model as in Figure~\ref{entry-wise-1} and generate a data set with two clusters, each with 20 samples.  The feature dimension is picked to be 20 and the entry-wise contamination proportion is set to be $0.075$. 
Observations 0 - 19 belong to cluster 1 and observations 20 - 39 belong to cluster 2.  Figures~\ref{fig:demo}(a) and (b) present the clustering results for least-squares convex clustering and our proposed method while Figures~\ref{fig:demo}(c) and (d) present the dendrograms, both of which are generated following the iterative one-step approximation scheme as in Algorithm \ref{Alg:huberadmm}. Crosses and circles 
indicate the true cluster assignment, while the color indicates the results of clustering methods. The least-squares convex clustering fails to distinguish the two clusters at any $\lambda$ and tends to treat many observations as singleton clusters, while our proposed method (b) manages to almost distinguish the two clusters with only 2 or 3 observations misclustered.

\begin{figure}[!t]
\centering
\subfigure[varying sample sizes, $p=20$, entry-wise contamination = $2\%$, $\tau=1$.]{
\includegraphics[height=0.18\textheight,width=.45\textwidth]{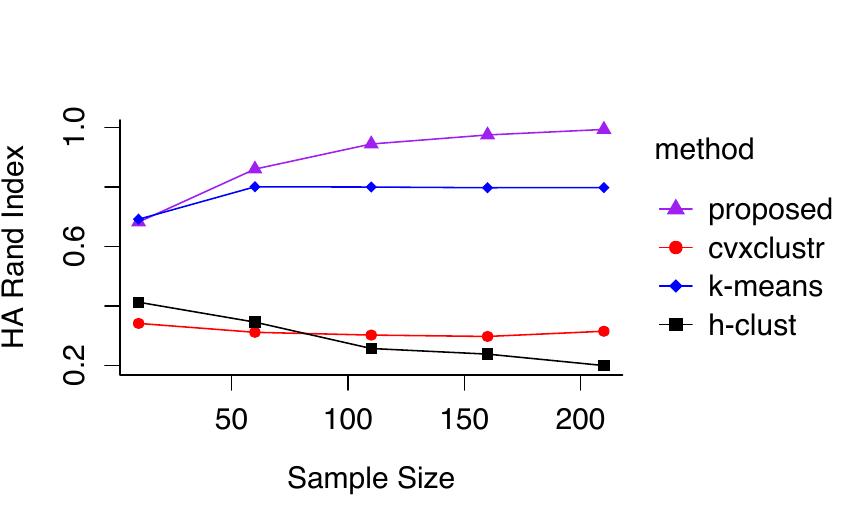}
\hspace{3mm}
\includegraphics[height=0.18\textheight,width=.45\textwidth]{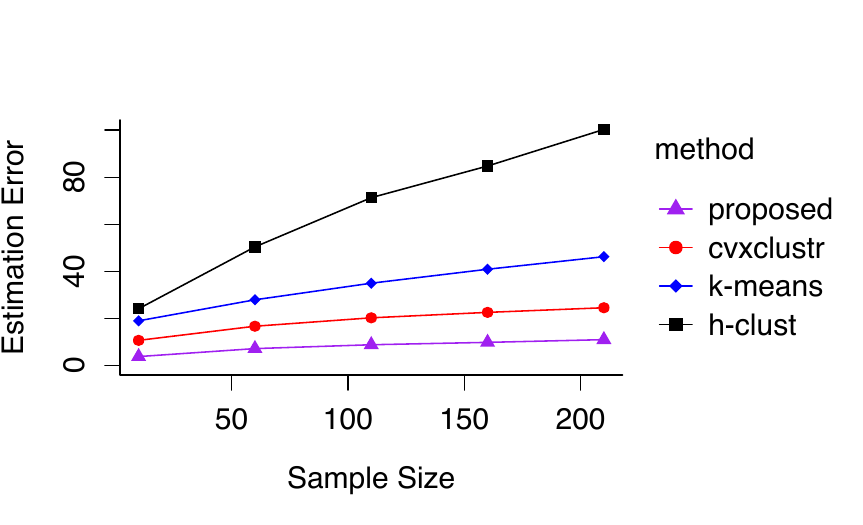}
}
\subfigure[varying feature dimensions, $n = 40$, entry-wise contamination = $2\%$, $\tau=0.1$.]{
\includegraphics[height=0.18\textheight,width=.45\textwidth]{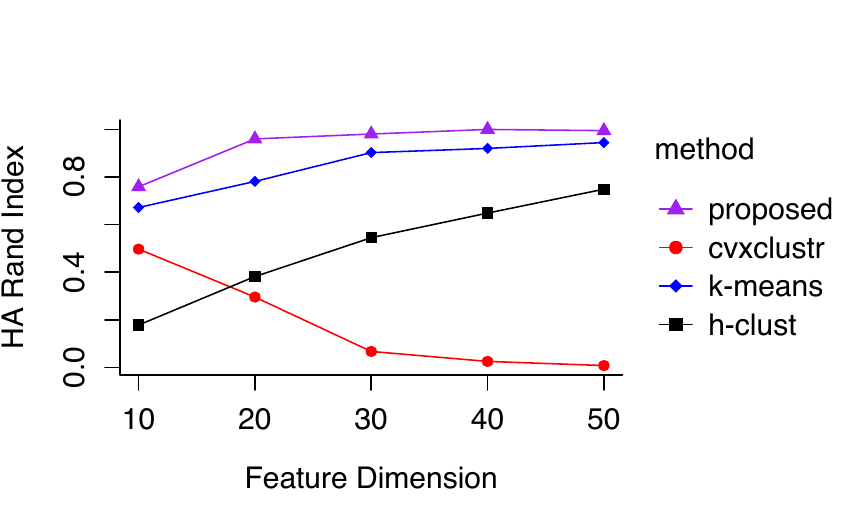}
\hspace{3mm}
\includegraphics[height=0.18\textheight,width=.45\textwidth]{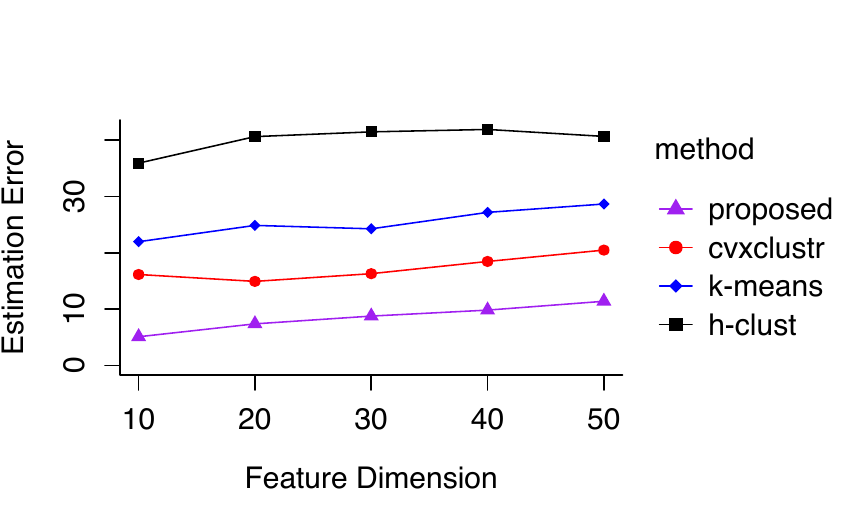}
}
\subfigure[varying sample sizes, $p=20$, entry-wise contamination = $10\%$, $\tau=0.19$.]{
\includegraphics[height=0.18\textheight,width=.45\textwidth]{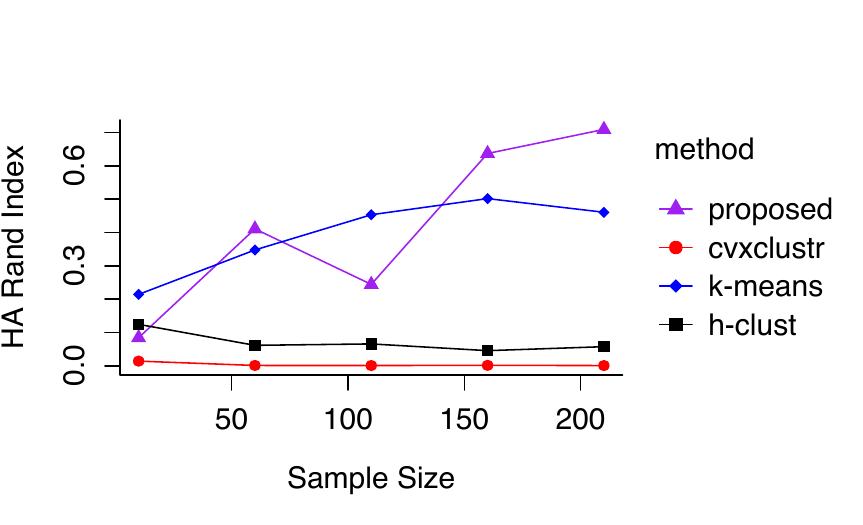}
\hspace{3mm}
\includegraphics[height=0.18\textheight,width=.45\textwidth]{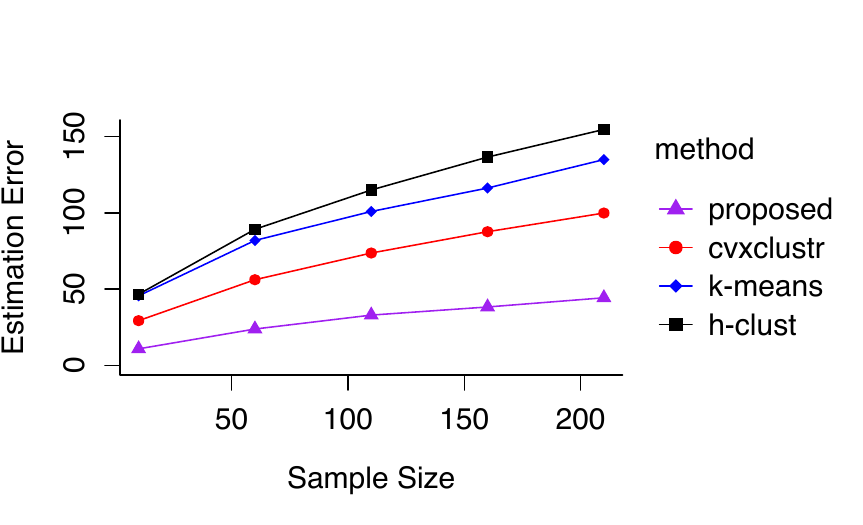}
}
\subfigure[varying feature dimensions, $n = 40$, entry-wise contamination = $10\%$, $\tau=0.08$.]{
\includegraphics[height=0.18\textheight,width=.45\textwidth]{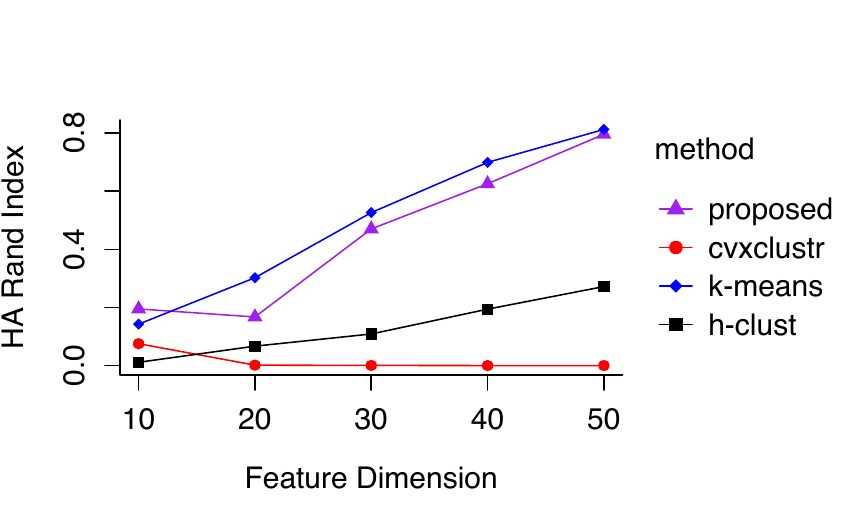}
\hspace{3mm}
\includegraphics[height=0.18\textheight,width=.45\textwidth]{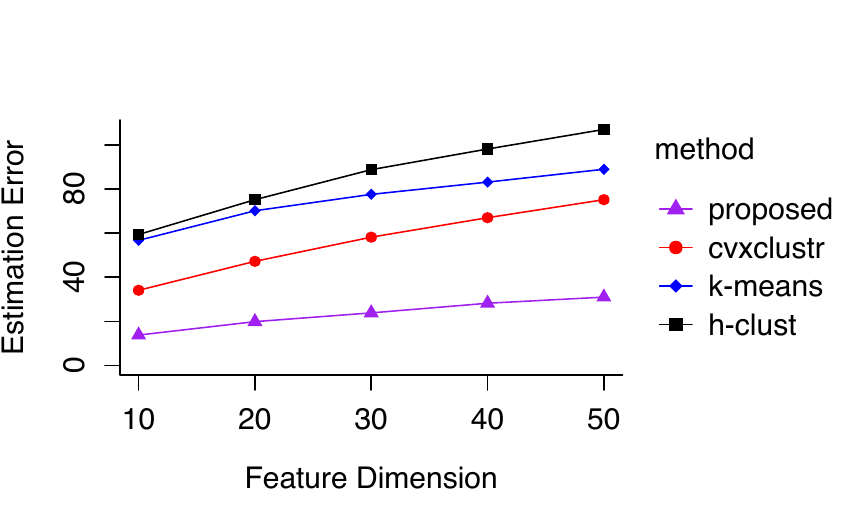}
}
\caption{Comparing our proposed method with others for data with Gaussian noise and uniform outliers with entry-wise contamination. The left panel shows the HA Rand index and the right panel collects the estimation error. 
In all panels, purple, red, blue, and black lines mark our proposed method, least-squares convex clustering, $k$-means, and hierarchical clustering respectively. 
}
\label{entry-wise-1}
\end{figure}

\begin{figure}[!t]
\centering
\subfigure[varying entry-wise outlier proportions, $n = 20, p = 10$, $t$-noise with 5 degrees of freedom, $\tau=0.1$.]{
\includegraphics[height=0.18\textheight,width=.45\textwidth]{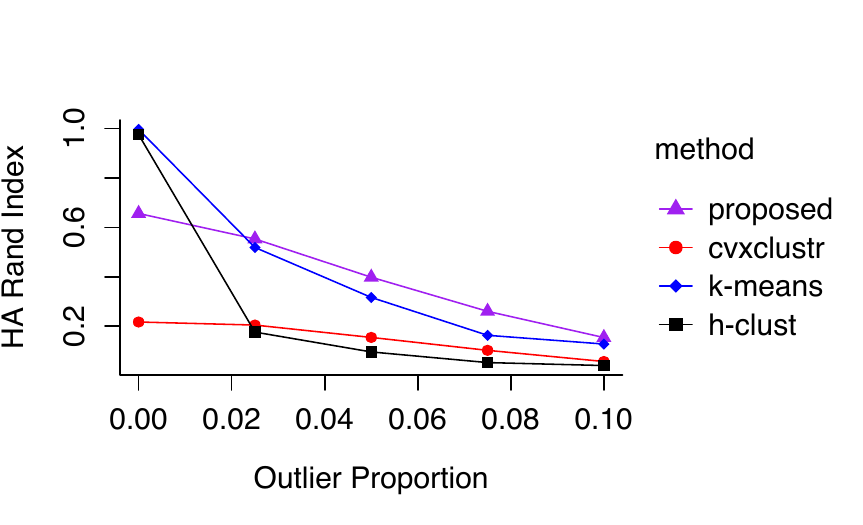}
\hspace{3mm}
\includegraphics[height=0.18\textheight,width=.45\textwidth]{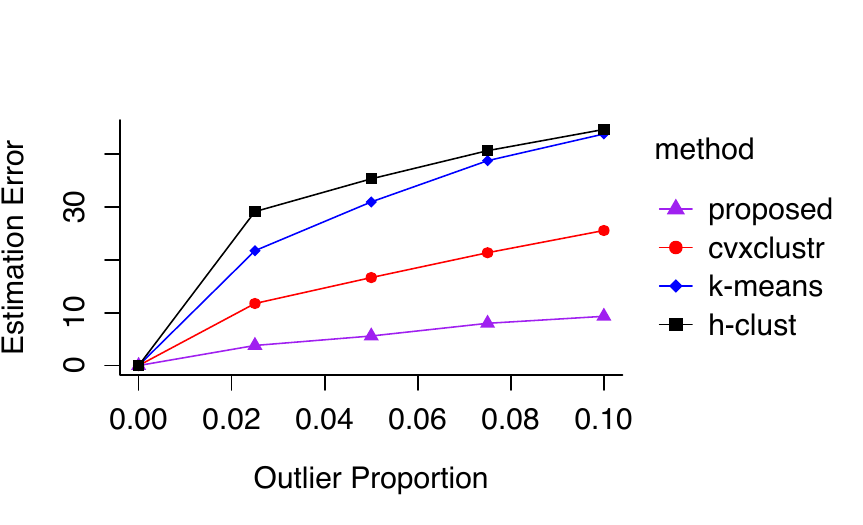}
}
\subfigure[varying entry-wise outlier proportions, $n = 40, p = 20$, $t$-noise with 5 degrees of freedom, $\tau=0.1$.]{
\includegraphics[height=0.18\textheight,width=.45\textwidth]{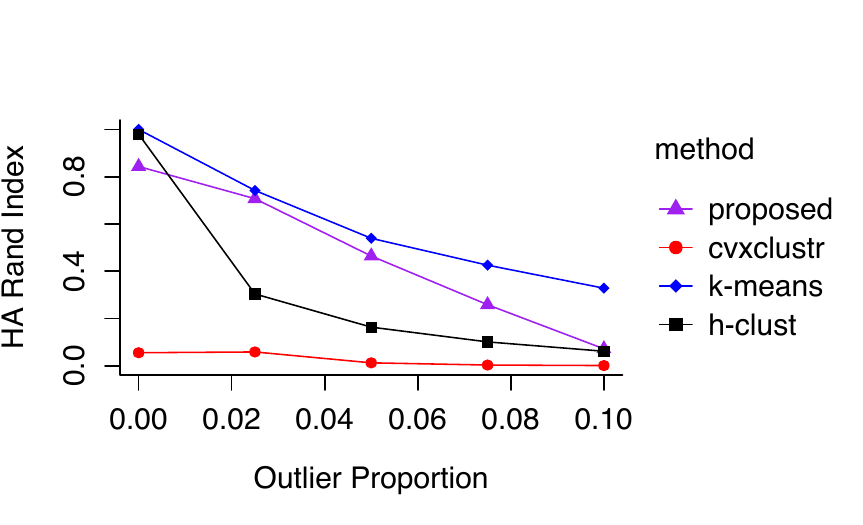}
\hspace{3mm}
\includegraphics[height=0.18\textheight,width=.45\textwidth]{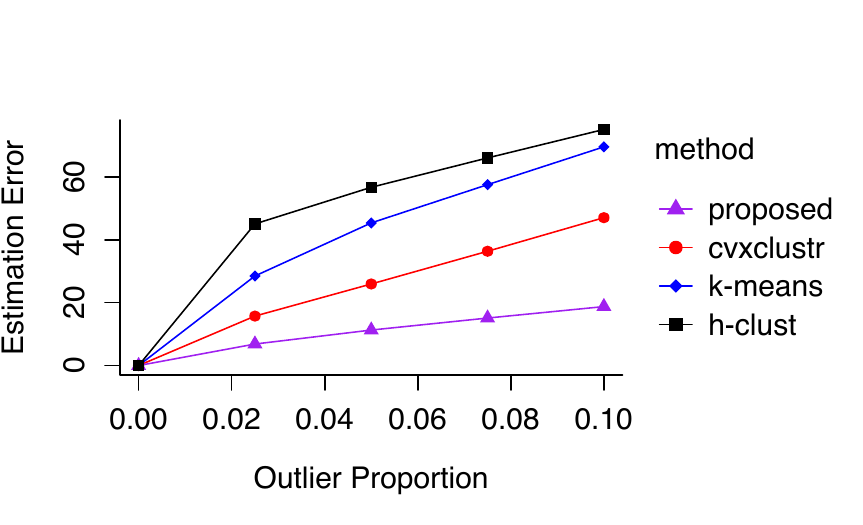}
}
\subfigure[varying degrees of freedom for $t$-noises, $n = 20, p=10$, entry-wise contamination = $2\%$, $\tau=0.1$.]{
\includegraphics[height=0.18\textheight,width=.45\textwidth]{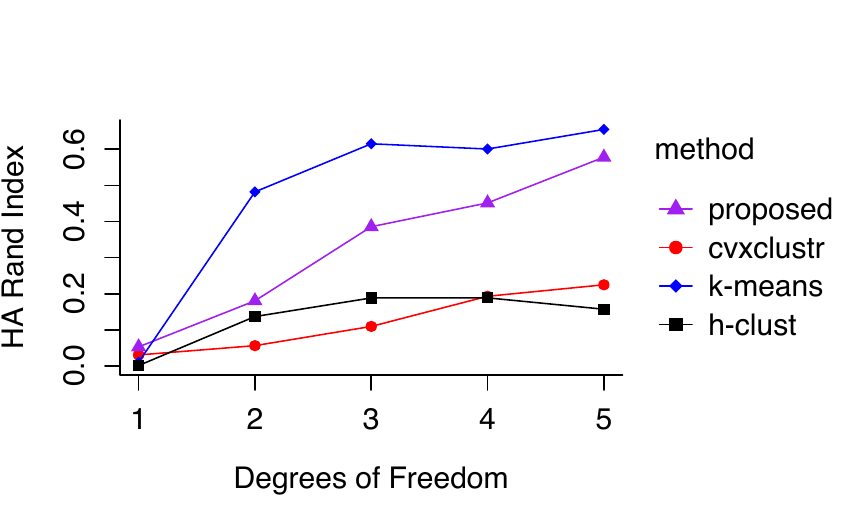}
\hspace{3mm}
\includegraphics[height=0.18\textheight,width=.45\textwidth]{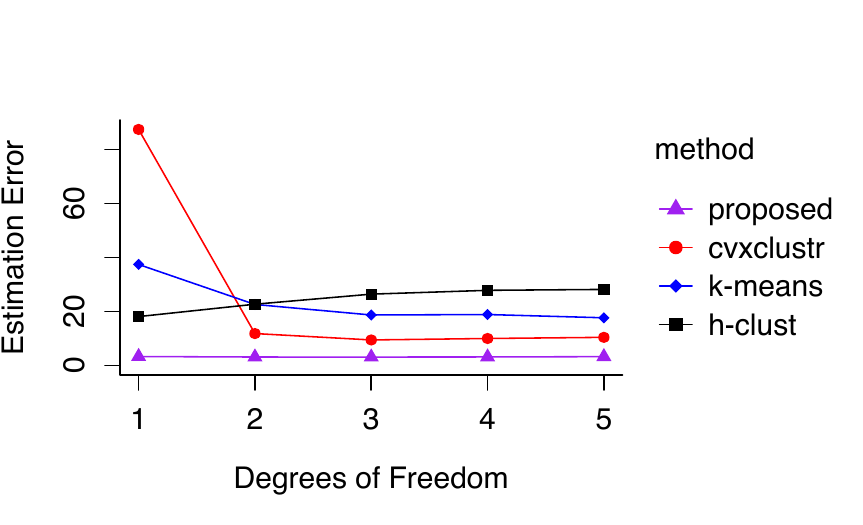}
}
\subfigure[varying degrees of freedom for $t$-noises, $n = 40, p=20$, entry-wise contamination = $2\%$, $\tau=0.1$.]{
\includegraphics[height=0.18\textheight,width=.45\textwidth]{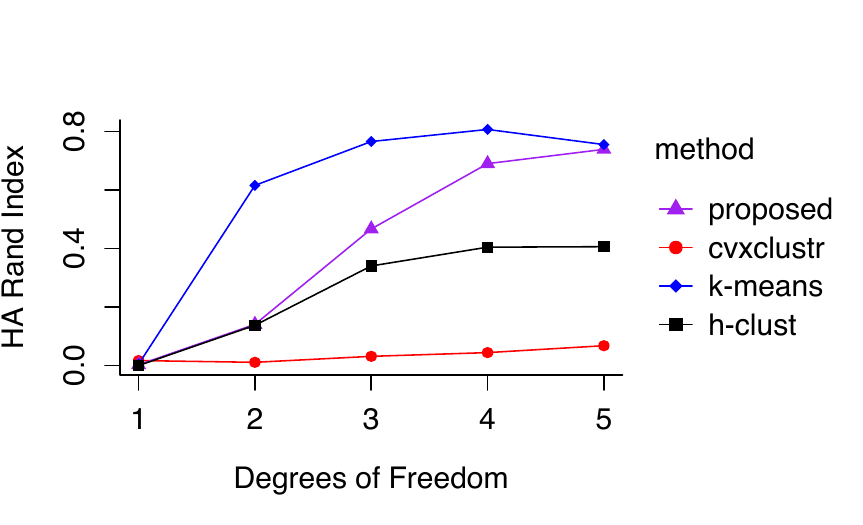}
\hspace{3mm}
\includegraphics[height=0.18\textheight,width=.45\textwidth]{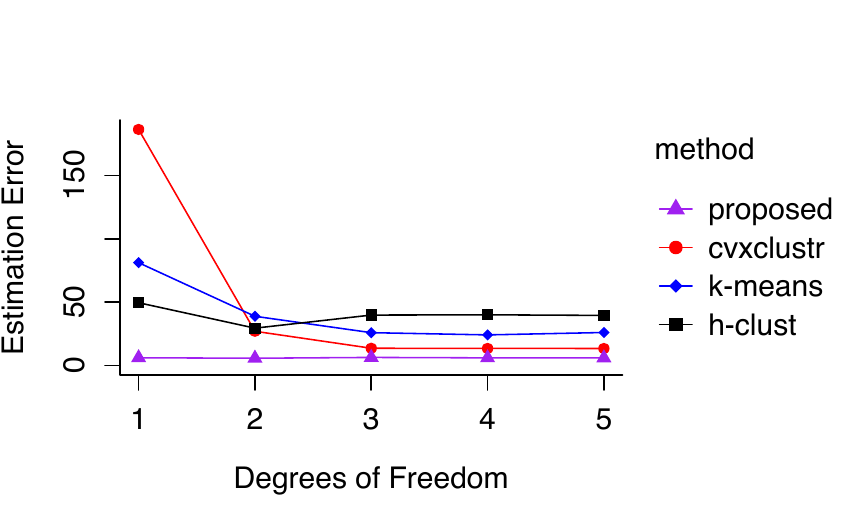}
}
\caption{Comparing our proposed method with others for data with $t$-noise and uniform outliers with entry-wise contamination. 
The left panel shows the HA Rand index and the right panel collects the estimation error. 
In all panels, purple, red, blue, and black lines mark our proposed method, least-squares convex clustering, $k$-means, and hierarchical clustering respectively. 
}
\label{entry-wise-2}
\end{figure}

\begin{figure}[!t]
\centering
\subfigure[varying sample sizes, $p = 20$, entry-wise contamination = $2\%$, $\tau=1$.]{
\includegraphics[height=0.18\textheight,width=.45\textwidth]{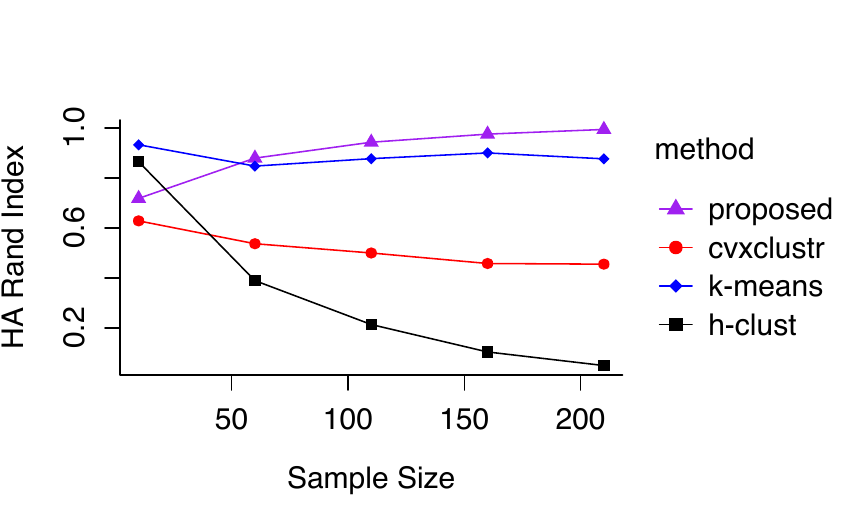}
\hspace{3mm}
\includegraphics[height=0.18\textheight,width=.45\textwidth]{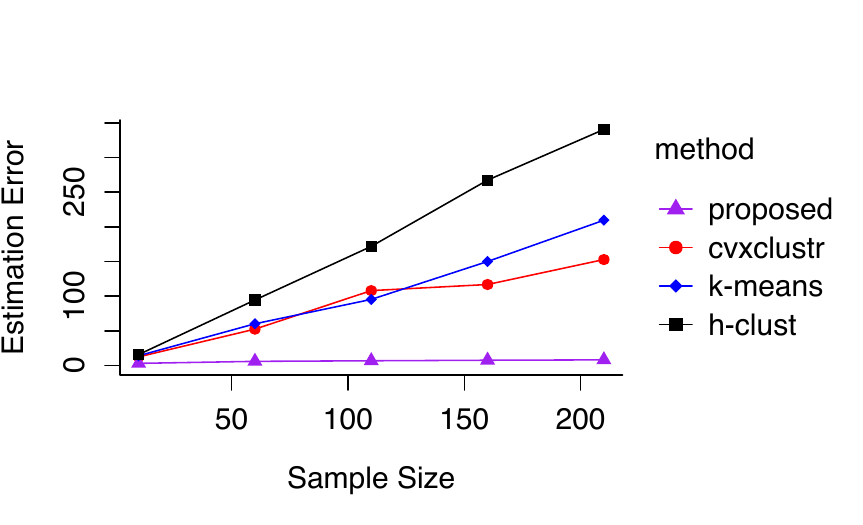}
}
\subfigure[varying feature dimensions, $n = 40$, entry-wise contamination = $2\%$, $\tau=0.1$.]{
\includegraphics[height=0.18\textheight,width=.45\textwidth]{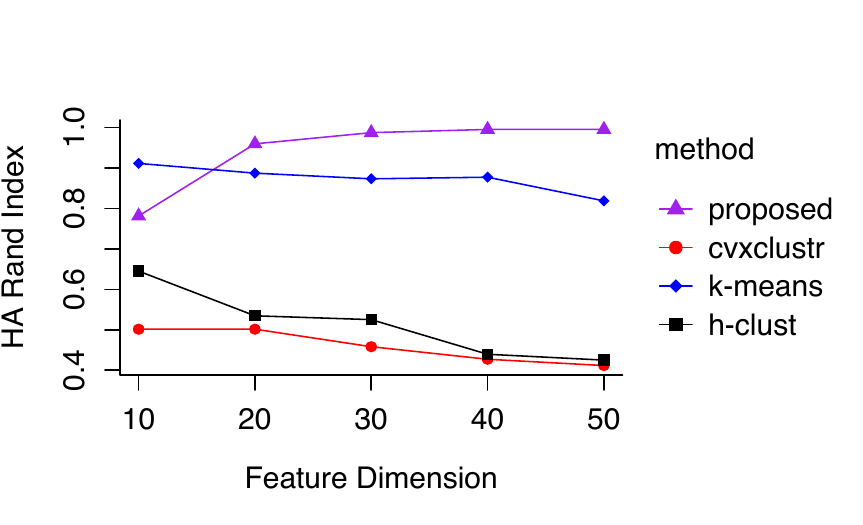}
\hspace{3mm}
\includegraphics[height=0.18\textheight,width=.45\textwidth]{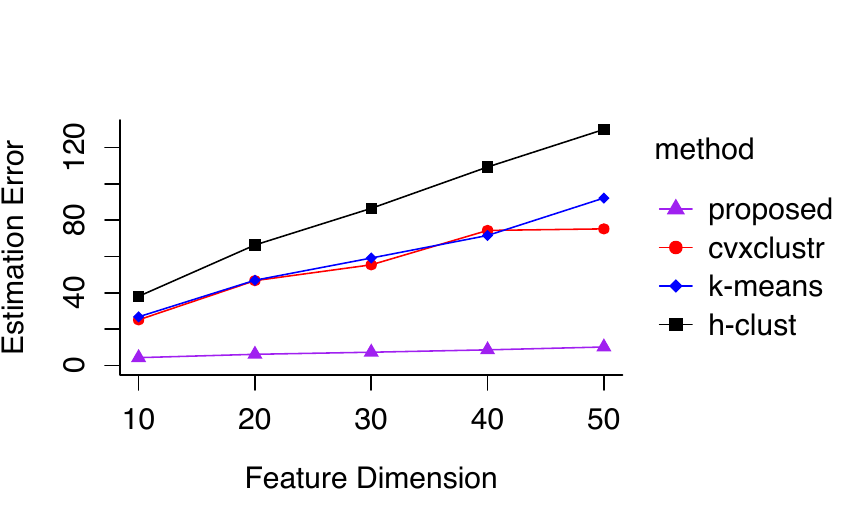}
}
\subfigure[varying entry-wise outlier proportions, $n = 40, p = 20$, $\tau=0.1$.]{
\includegraphics[height=0.18\textheight,width=.45\textwidth]{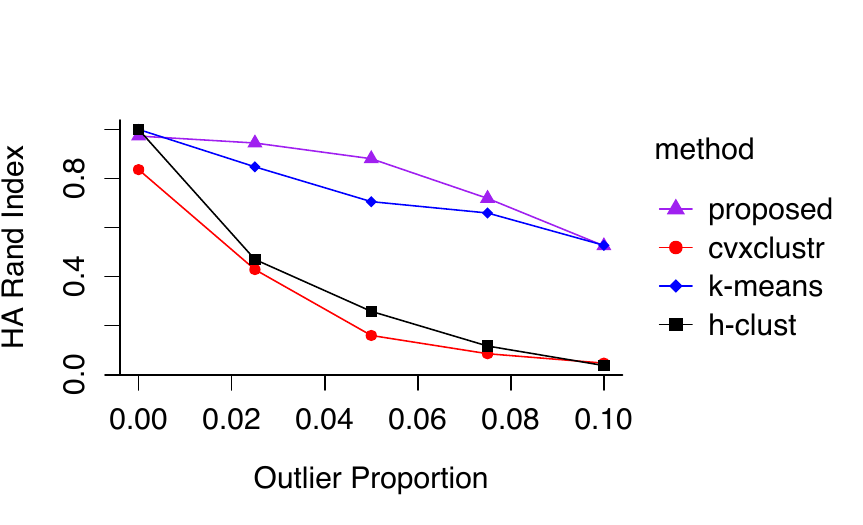}
\hspace{3mm}
\includegraphics[height=0.18\textheight,width=.45\textwidth]{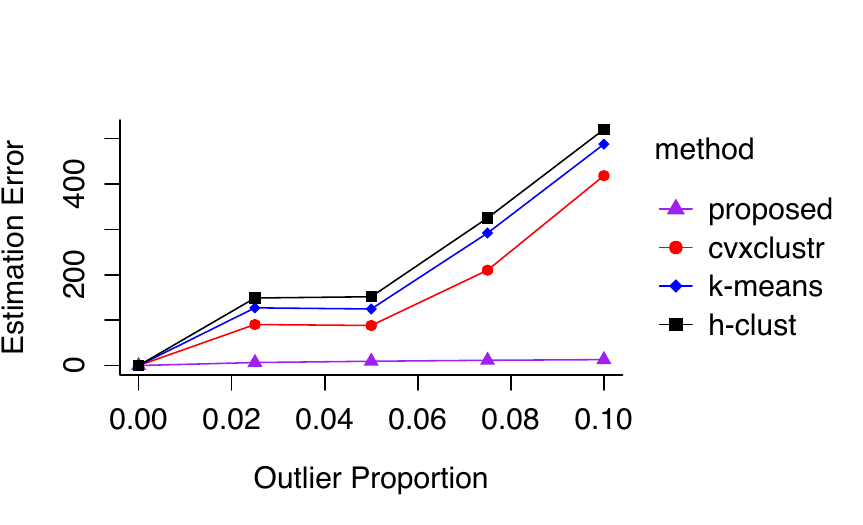}
}
\caption{Comparing our proposed method with others for data with Gaussian noise and $t$-outliers with 1 degree of freedom and entry-wise contamination. The left panel shows the HA Rand index and the right panel collects the estimation error. 
In all panels, purple, red, blue, and black lines mark our proposed method, least-squares convex clustering, $k$-means, and hierarchical clustering respectively. 
}
\label{entry-wise-3}
\end{figure}

\begin{figure}[!t]
\centering
{\includegraphics*[width=\textwidth]{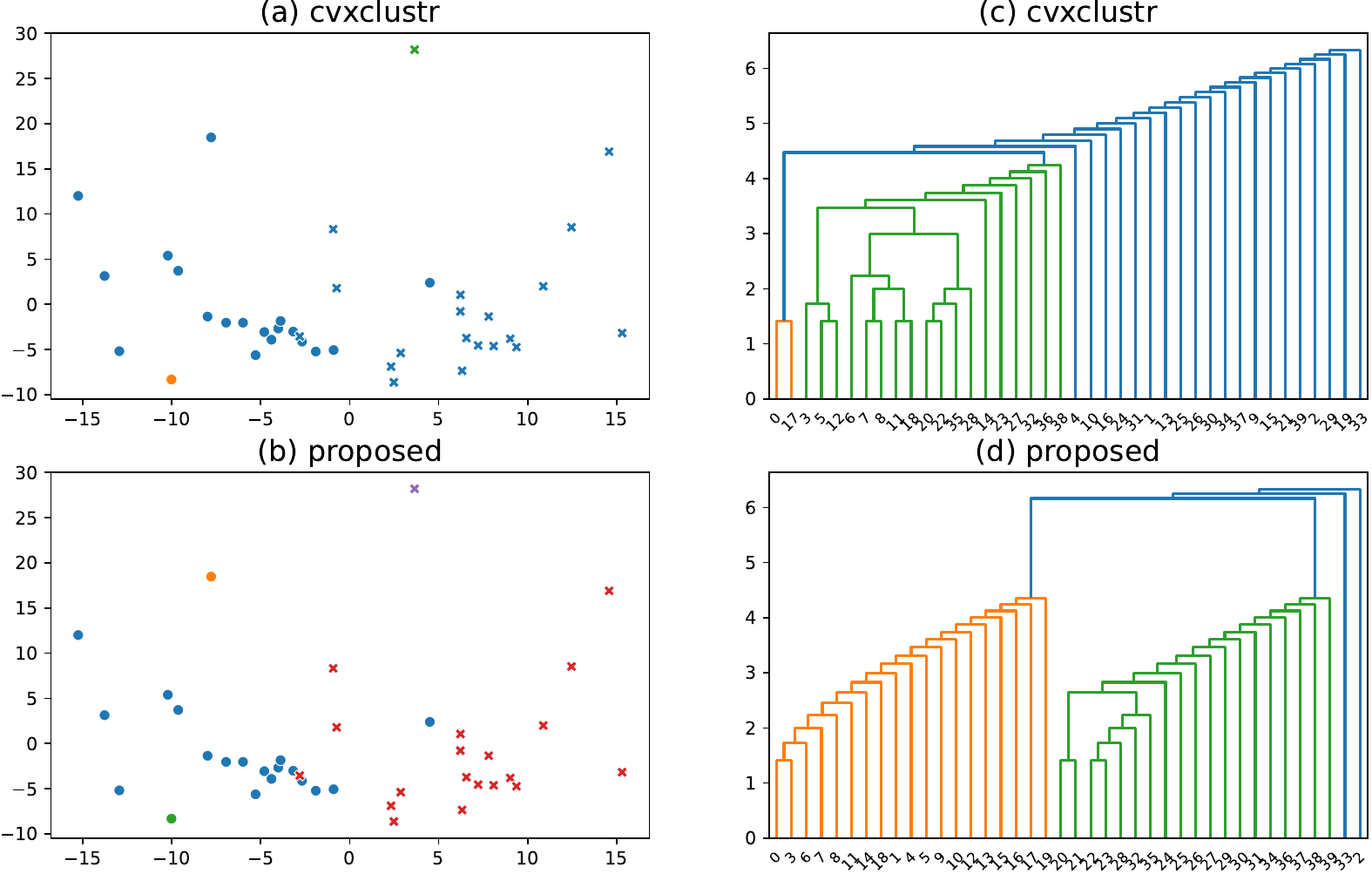}}
\caption{A toy example to demonstrate the resistance. The proposed method identifies the two clusters with high accuracy while the least-squares convex clustering fails to distinguish the two clusters at any $\lambda$.}
\label{fig:demo}
\end{figure}

\scolor{ 
\paragraph{Effect of $\tau$.} 
The resistantification parameter $\tau$ in the Huber loss function, as defined in \eqref{eq:ccl2}, plays an important role in our method.
Intuitively, a smaller $\tau$ enhances robustness, while a larger $\tau$ produces results closer to those of classical convex clustering.
In the previous simulations, we report results for some fixed $\tau$ values, and we now provide some additional simulation results for the effect of $\tau$.
We generate data with Gaussian noise for 1) $(n,p)=(20,10)$ and 2) $(n,p)=(40,20)$, respectively. 
Furthermore, we generate uniform outliers with entry-wise adversarial contamination at proportions of $2\%$ and $10\%$, respectively.
For $2\%$ contamination, we vary $\tau$ from 1 to 21 in increments of 2, while for $10\%$ contamination, we vary $\tau$ from 0.1 to 0.9 in increments of 0.1. 
We set the increment of $\tau$ to 1 for the $2\%$ contamination case to reduce the computational cost, as this is sufficient to reveal the performance pattern as $\tau$ gets larger. 
The results are provided in Figure~\ref{tau-effect}. 
We can see that smaller $\tau$ produces a higher HA Rand index value, and a $\tau$ value of 3 yields the smallest estimation error for $2\%$ contamination. 
In addition, as $\tau$ gets larger, both the HA Rand index and the estimation error of our proposed method become closer to those of the convex clustering method.
}

\scolor{
\paragraph{Overlapping clusters.}
In this paragraph, we examine  the performance of our proposed method when clusters have small inter-cluster distances and overlapping coordinates. 
Specifically, we retain the first cluster centroid as $\Ub_1 \sim \mathcal{N}_p(\mathbf{0},\Ib)$ while modifying the second cluster centroid $\Ub_2$.
For data with Gaussian noise and  entry-wise  contamination with uniform outliers, we generate $\Ub_2 \sim \mathcal{N}_p((\mathbf{0}_{p/2},\mathbf{-3}_{p/2})^\T,\Ib)$; the results are presented in Figure~\ref{overlap-clusters-1}. 
For data with Gaussian noise $\mathcal N({\bf 0},\Ib)$ and  row-wise contamination by  $t$-outliers with 1 degree of freedom, we generated $\Ub_2 \sim \mathcal{N}_p((\mathbf{0}_{p/2},\mathbf{-1}_{p/2})^\T,\Ib)$; the results are presented in Figure~\ref{overlap-clusters-2}. 
For both cases, we implement our proposed method and the least-squares convex clustering method using the Gaussian kernel weights: $w_{ii'}=\exp(-\phi\|\mathbf X_i-\mathbf X_{i'}\|_2^2)$ with parameters $\tau$ and $\phi$ specified in the figure captions. 
We observe that even when the cluster centroids have overlapping coordinates, our proposed method still achieves a satisfactory HA Rand index and significantly outperforms all the other competing methods in minimizing the estimation error.
}

\paragraph{Real-data analysis.}
Finally, we apply our proposed method to several real-world datasets taken from the UCI Machine Learning Repository:  Chemical Composition of Ceramic Samples dataset (Ceramic), Libras Movement dataset (Libras Movement), and Seeds dataset (Seeds). We use cross-validation to select the appropriate $\tau$ for each dataset. 
\scolor{ 
For each dataset, we standardize the features to ensure all variables are on the same scale. As part of the exploratory data analysis, Figure~\ref{real-boxplots} presents boxplots of each feature across the three datasets to highlight key characteristics, particularly potential outliers or data contamination. It is evident that all three datasets, especially the Ceramic and Libras Movement datasets, may contain outliers. 
}

\scolor{We split the datasets into 5, 4, and 4 folds (i.e., equally-sized subsets) for the Seeds, Ceramic and Libras Movement datasets respectively to have an integer number of data points for each fold.} 
For all three datasets, one fold is used for validation while the rest folds are used for training. 
In the validation stage, validation data points are assigned to the clusters formed in the training stage according to the shortest $\ell_2$ distance. 
\scolor{We implement our proposed method and the least-squares convex clustering method using the uniform weights $w_{ii'}=1$, abbreviate as uniform, and the Gaussian kernel weights $w_{ii'}=\exp(-\phi\|\mathbf X_i-\mathbf X_{i'}\|_2^2)$ with $\phi=0.1$, abbreviate as GKernel.}
\scolor{In the training stage, we still apply algorithm~\ref{Alg:huberadmm}}, and the algorithm stops when it reaches the maximal HA rand index for the training data. 
\scolor{For all three data, we pick $\tau$ from 0.001 to 0.01 by an increment of 0.001, and from 0.02 to 1 by an increment of 0.01.} 
We report the average HA rand index in the training stage, and the total average HA Rand index of the training data and validation data together. 
Table~\ref{table1} reports the HA Rand indices and the best $\tau$ selected. For all three datasets considered here, our proposed methods outperform the least-squares convex clustering methods, and our method with Gaussian kernel weights outperforms that with uniform weights.

\begin{table}[!t]
\footnotesize
\begin{center}
\caption{Results for three real-world data sets. The sizes of the data (number of data points in each cluster $\times$ number of clusters) and the HA rand indices are reported.}
\resizebox{\textwidth}{!}{
\scolor{
\begin{tabular}{cccccccc}
  \hline
 \multirow{2}{*}{Method} & \multirow{2}{*}{Weights} &  \multicolumn{2}{c}{Seeds ($70\times 3$)} & \multicolumn{2}{c}{Ceramic ($44\times 2$)} & \multicolumn{2}{c}{Libras Movement ($24\times 15$)} \\
 \cmidrule(lr){3-4} \cmidrule(lr){5-6} \cmidrule(lr){7-8}
& & Train & Total & Train & Total & Train & Total \\
 \hline
\multirow{2}{*}{proposed} 
& uniform & 0.546 ($\tau=0.170$) & 0.493 ($\tau=0.180$) & 0.345 ($\tau=0.060$) & 0.350 ($\tau=0.060$) & 0.147 ($\tau=0.008$) & 0.096 ($\tau=0.680$) \\
 & GKernel & 0.903 ($\tau=0.002$) & 0.719 ($\tau=0.006$) & 0.942 ($\tau=0.001$) & 0.850 ($\tau=0.008$) & 0.365 ($\tau=0.001$) & 0.288 ($\tau=0.790$) \\
\hline
\multirow{2}{*}{cvxclustr} 
& uniform & 0.037 & 0.247 & 0.004 & 0.143 & 0.014 & 0.056 \\ 
& GKernel & 0.324 & 0.408 & 0.665 & 0.493 & 0.318 & 0.266 \\
\hline
\end{tabular}
}
}
\label{table1}
\end{center}
\end{table}

\section{Conclusions}
We propose a resistant convex clustering method and a corresponding ADMM algorithm. 
Theoretically, we analyze the breakdown point of the proposed resistant convex clustering
method. We show that the proposed estimator does not break down until more than half of the observations are
arbitrary outliers. This is somewhat surprising, at least to us, as we expected
one arbitrary large outlier will destroy the clustering procedure because there are as
many parameters as the samples. Compared with the estimator without the fusion
penalty, we find that the fusion penalty helps enhance the resistance of the clustering procedure by fusing the estimators of the centroids to the cluster centroids of uncontaminated observations,  but not the other way around. Indeed, the least-squares convex clustering with the fusion penalty breaks down when there is only one adversarial sample, where the fusion penalty does not help enhance the resistance property for the least-squares loss. This demonstrates the necessity of using the Huber loss function.  We conjecture the phenomenon of enhancing resistance/robustness would hold for general graph-type penalties/constraints. 
As future work, our proposed method can be extended to biclustering problem \citep{chi2014convex} and co-clustering problem for tensors \citep{chi2018provable}.

\bibliography{reference}
\bibliographystyle{apalike}




\begin{figure}[!t]
\centering
\subfigure[$\tau$ ranges from 1 to 21 by an increment of 2, $n=20, p=10$, entry-wise contamination = $2\%$.]{
\includegraphics[height=0.18\textheight,width=.45\textwidth]{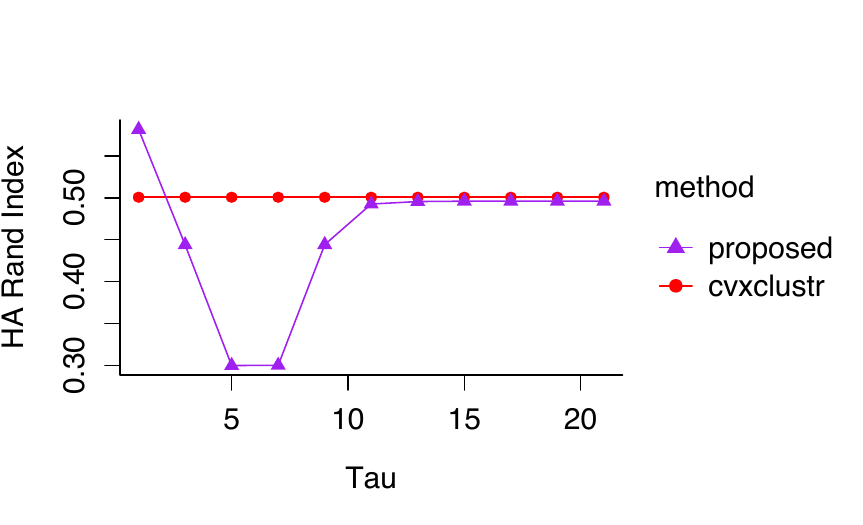}
\hspace{3mm}
\includegraphics[height=0.18\textheight,width=.45\textwidth]{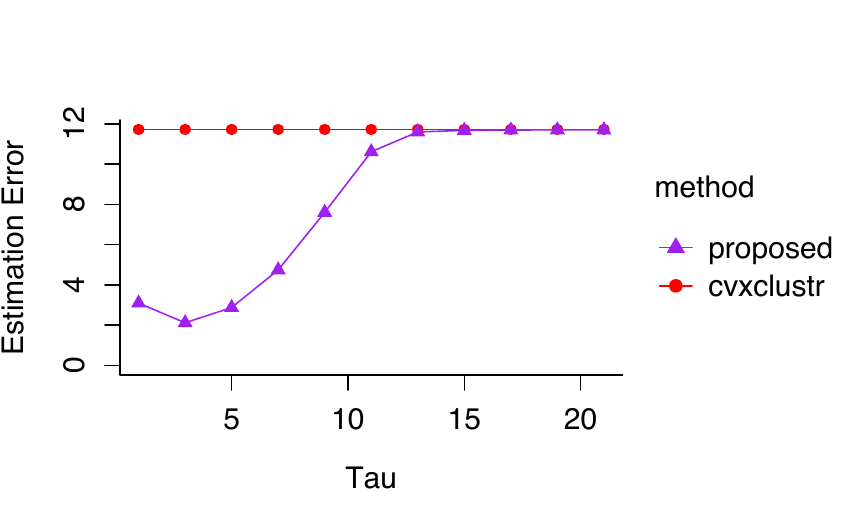}
}
\subfigure[$\tau$ ranges from 1 to 21 by an increment of 2, $n=40, p=20$, entry-wise contamination = $2\%$.]{
\includegraphics[height=0.18\textheight,width=.45\textwidth]{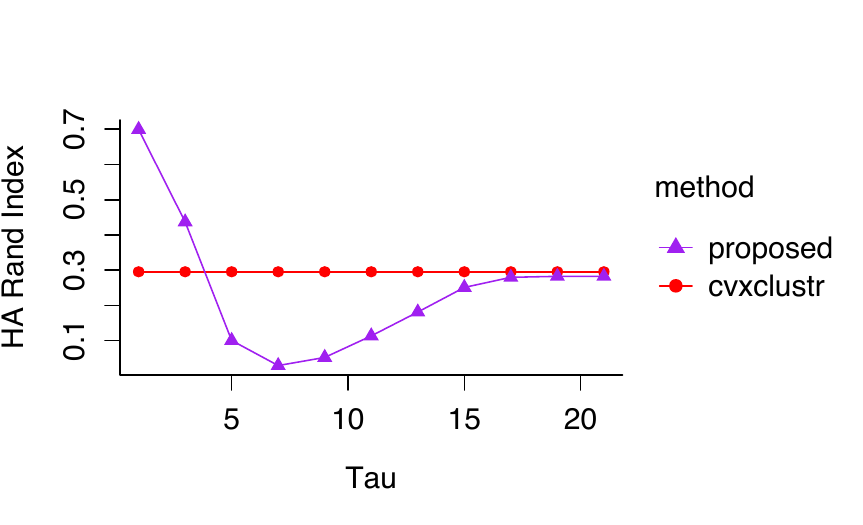}
\hspace{3mm}
\includegraphics[height=0.18\textheight,width=.45\textwidth]{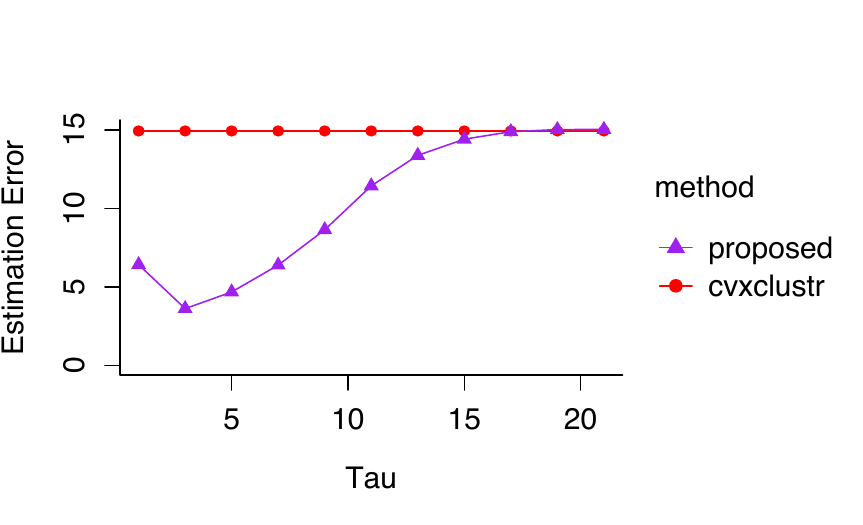}
}
\subfigure[$\tau$ ranges from 0.1 to 0.9 by an increment of 0.1, $n=20, p=10$, entry-wise contamination = $10\%$.]{
\includegraphics[height=0.18\textheight,width=.45\textwidth]{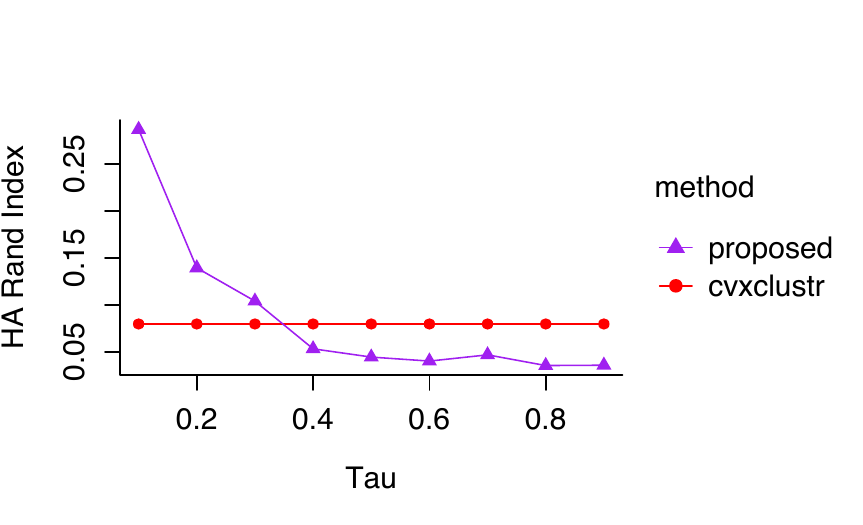}
\hspace{3mm}
\includegraphics[height=0.18\textheight,width=.45\textwidth]{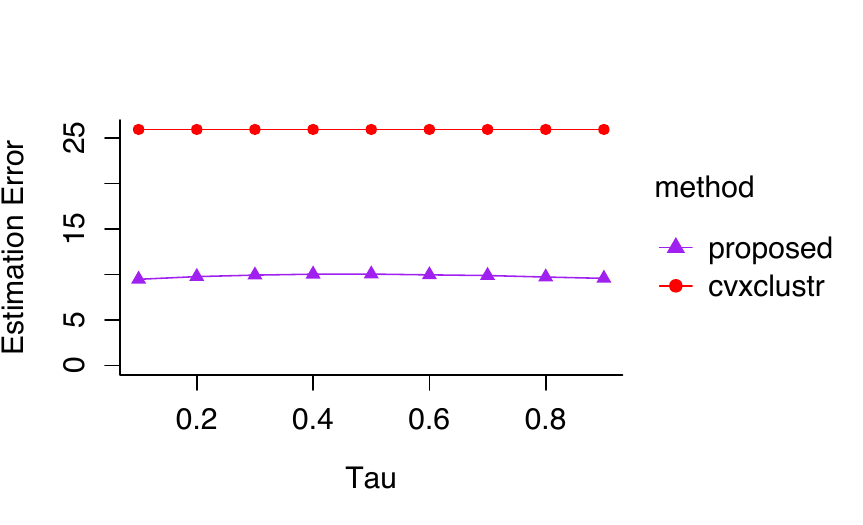}
}
\subfigure[$\tau$ ranges from 0.1 to 0.9 by an increment of 0.1, $n=40, p=20$, entry-wise contamination = $10\%$.]{
\includegraphics[height=0.18\textheight,width=.45\textwidth]{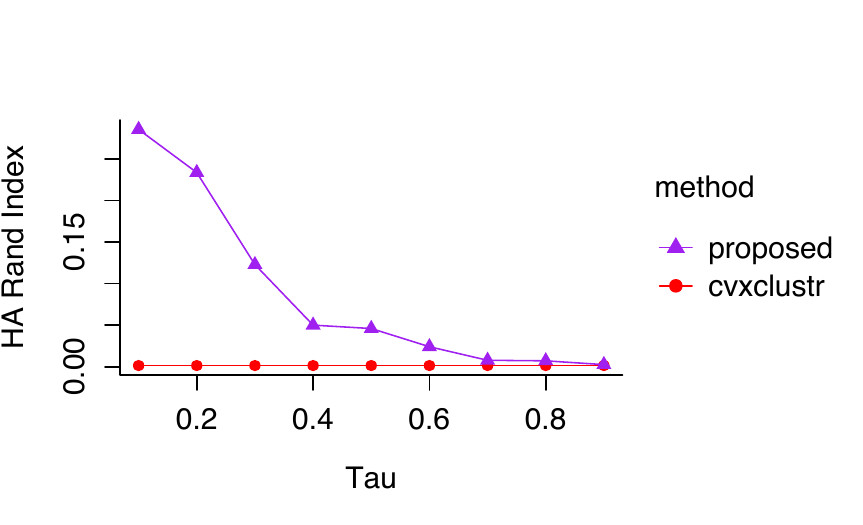}
\hspace{3mm}
\includegraphics[height=0.18\textheight,width=.45\textwidth]{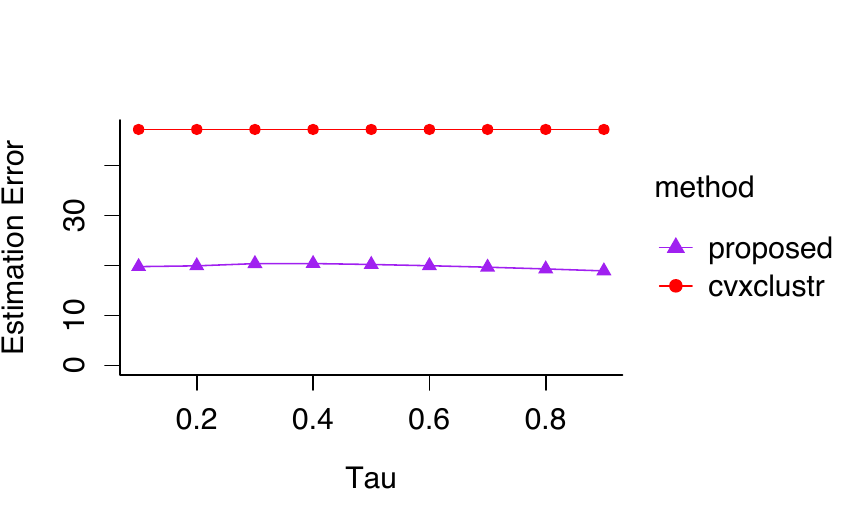}
}
\caption{Investigating the effect of $\tau$ for data with Gaussian noise and uniform outliers with entry-wise contamination. The left panel shows the HA Rand index and the right panel collects the estimation error. 
In all panels, purple and red lines mark our proposed method and least-squares convex clustering respectively. 
}
\label{tau-effect}
\end{figure}

\begin{figure}[!t]
\centering
\subfigure[varying sample sizes, $p = 20$, entry-wise contamination = $2\%$, $\tau=0.01, \phi=0.001$.]{
\includegraphics[height=0.18\textheight,width=.45\textwidth]{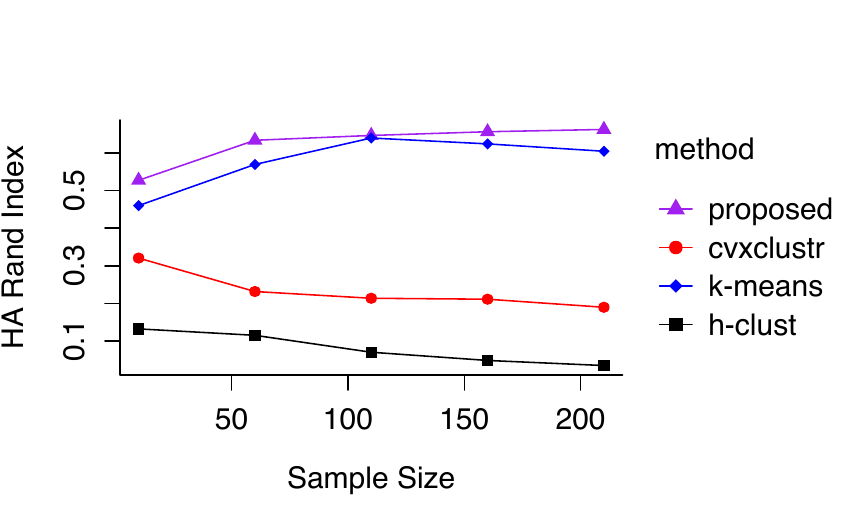}
\hspace{3mm}
\includegraphics[height=0.18\textheight,width=.45\textwidth]{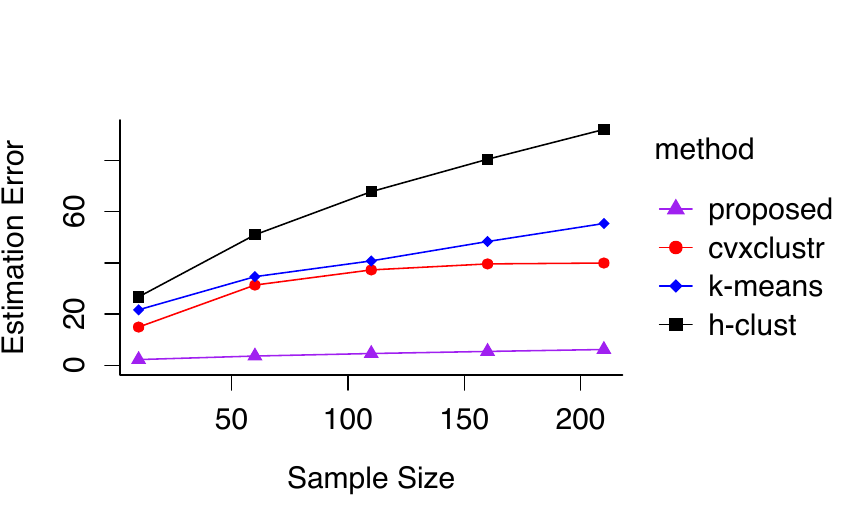}
}
\subfigure[varying feature dimensions, $n = 40$, entry-wise contamination = $2\%$, $\tau=0.01, \phi=0.001$.]{
\includegraphics[height=0.18\textheight,width=.45\textwidth]{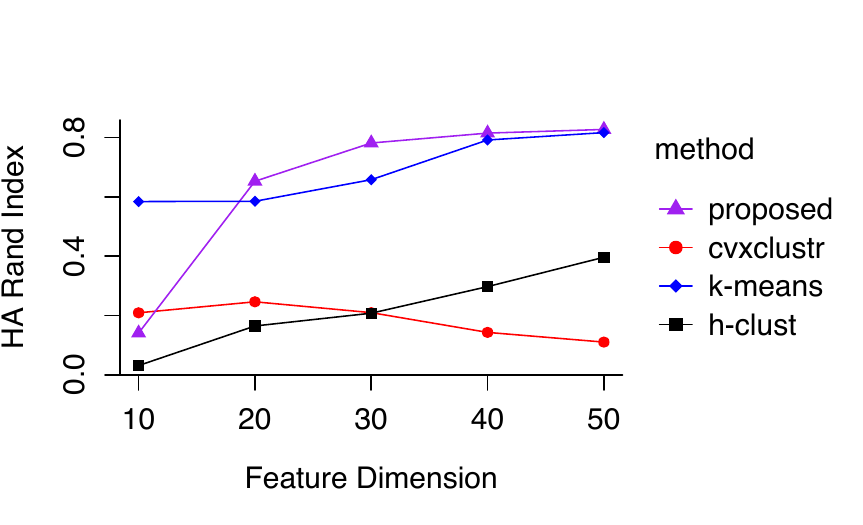}
\hspace{3mm}
\includegraphics[height=0.18\textheight,width=.45\textwidth]{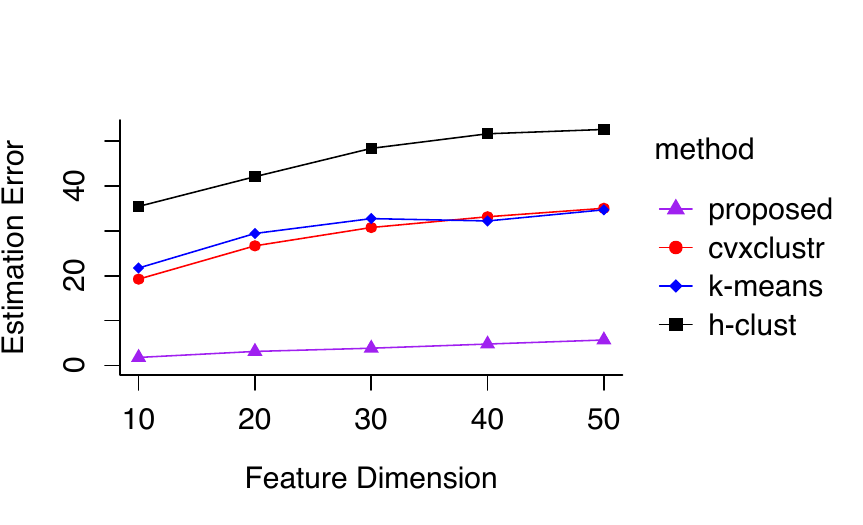}
}
\subfigure[varying entry-wise outlier proportions, $n = 40, p = 20$, $\tau=0.01, \phi=0.002$.]{
\includegraphics[height=0.18\textheight,width=.45\textwidth]{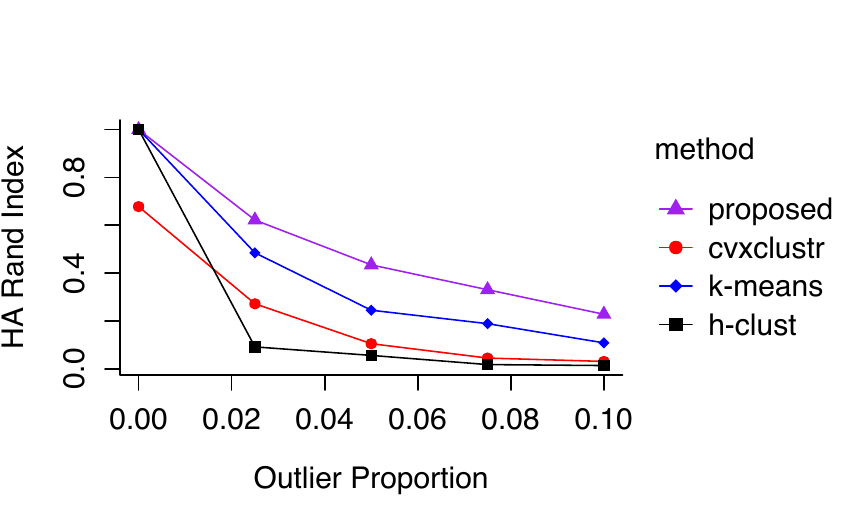}
\hspace{3mm}
\includegraphics[height=0.18\textheight,width=.45\textwidth]{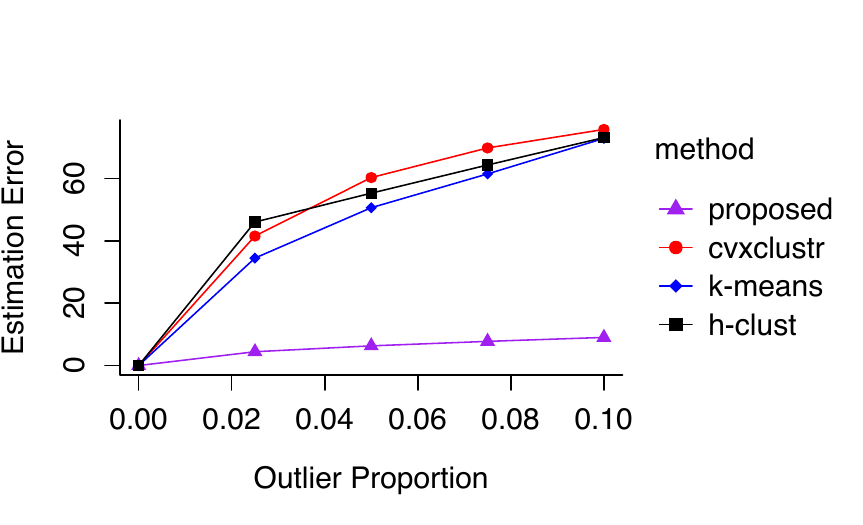}
}
\caption{Investigating the effect of overlapping clusters. 
Data generated from clusters $\Ub_1 \sim \mathcal{N}_p(\mathbf{0},\Ib)$ and $\Ub_2 \sim \mathcal{N}_p((\mathbf{0}_{p/2},\mathbf{-3}_{p/2})^\T,\Ib)$ with Gaussian noise and uniform outliers with entry-wise contamination.
The left panel shows the HA Rand index and the right panel collects the estimation error. 
In all panels, purple, red, blue, and black lines mark our proposed method with Gaussian kernel weights, least-squares convex clustering with Gaussian kernel weights, $k$-means, and hierarchical clustering respectively.  
}
\label{overlap-clusters-1}
\end{figure}

\begin{figure}[!t]
\centering
\subfigure[varying sample sizes, $p = 20$, row-wise contamination = $10\%$, $\tau=0.001, \phi=0.1$.]{
\includegraphics[height=0.18\textheight,width=.45\textwidth]{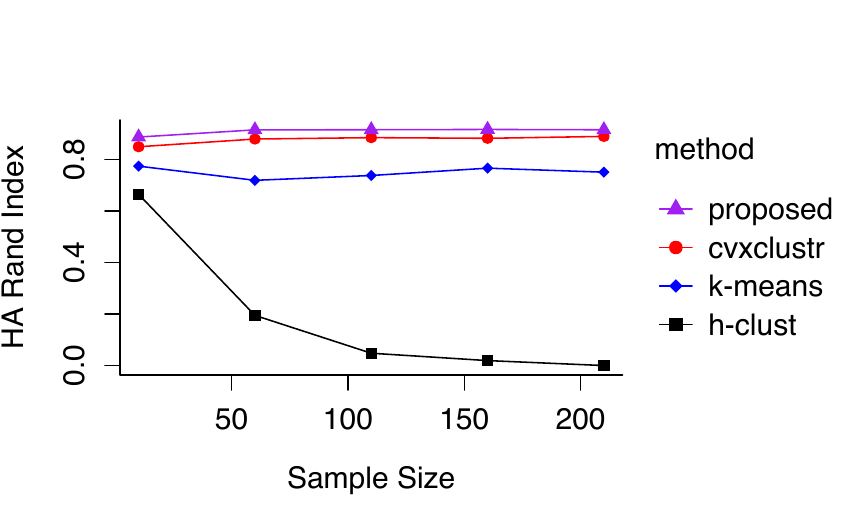}
\hspace{3mm}
\includegraphics[height=0.18\textheight,width=.45\textwidth]{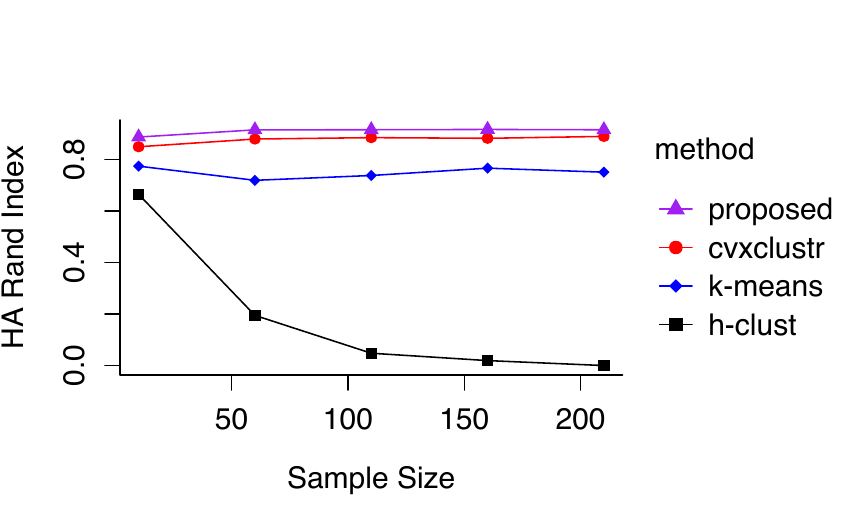}
}
\subfigure[varying feature dimensions, $n = 40$, row-wise contamination = $10\%$, $\tau=0.001, \phi=0.1$.]{
\includegraphics[height=0.18\textheight,width=.45\textwidth]{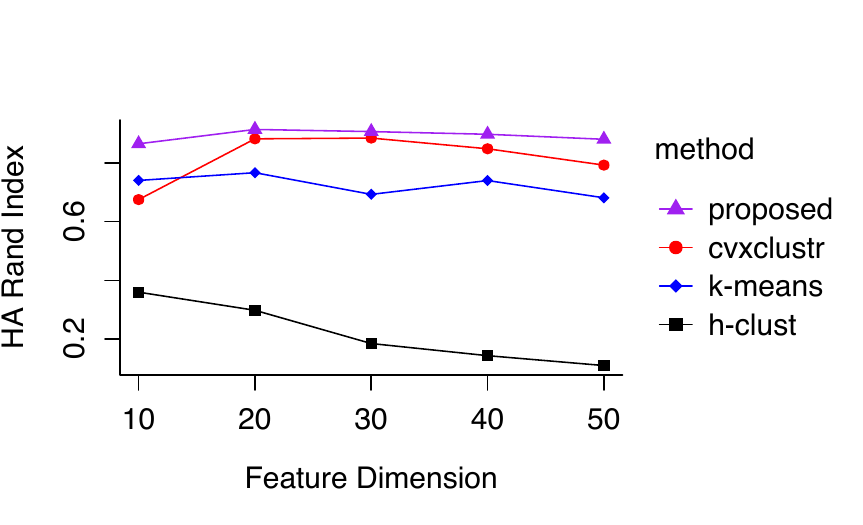}
\hspace{3mm}
\includegraphics[height=0.18\textheight,width=.45\textwidth]{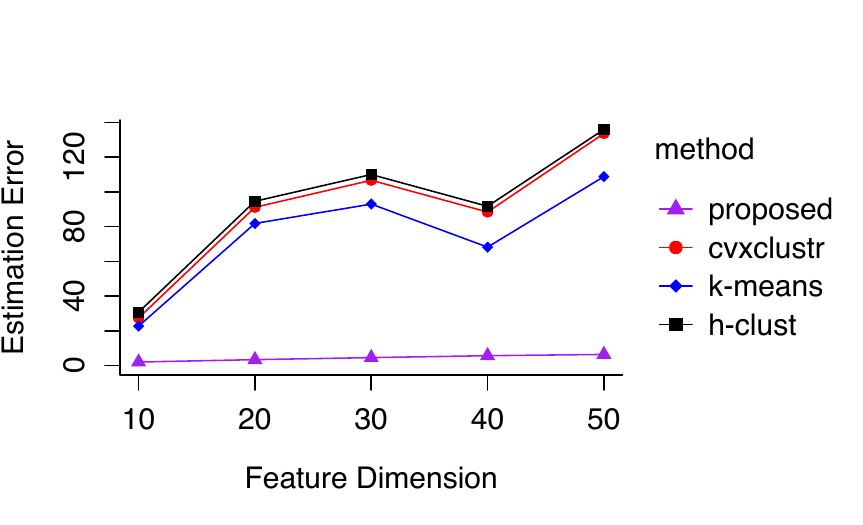}
}
\subfigure[varying row-wise outlier proportions, $n = 40, p = 20$, $\tau=0.001, \phi=0.1$.]{
\includegraphics[height=0.18\textheight,width=.45\textwidth]{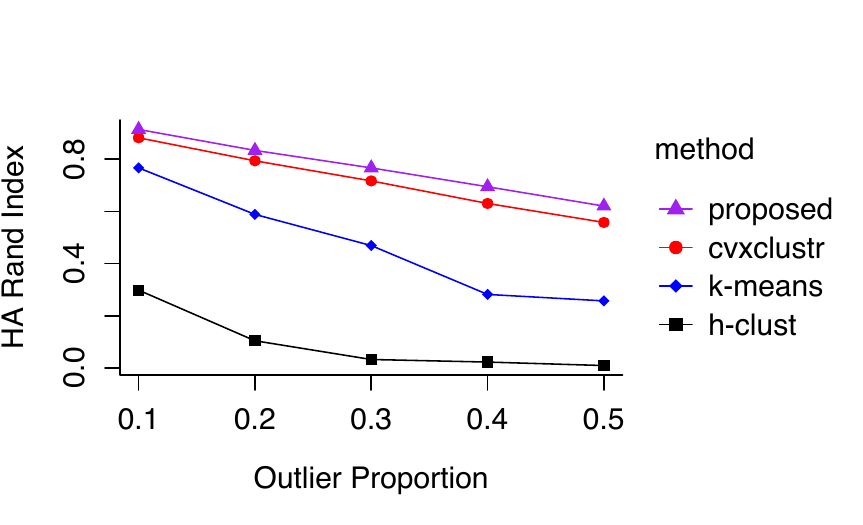}
\hspace{3mm}
\includegraphics[height=0.18\textheight,width=.45\textwidth]{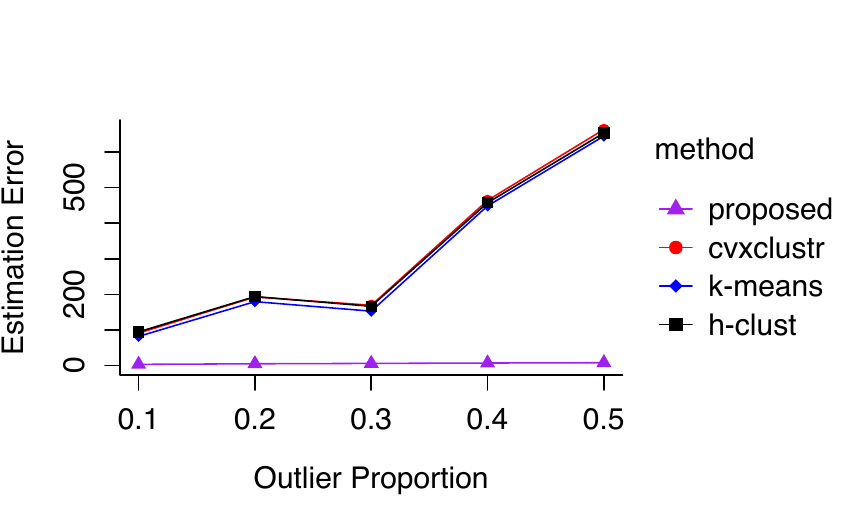}
}
\caption{Investigating the effect of overlapping clusters. 
Data generated from clusters $\Ub_1 \sim \mathcal{N}_p(\mathbf{0},\Ib)$ and $\Ub_2 \sim \mathcal{N}_p((\mathbf{0}_{p/2},\mathbf{-1}_{p/2})^\T,\Ib)$ with Gaussian noise and $t$-outliers with 1 degree of freedom and row-wise contamination.
The left panel shows the HA Rand index and the right panel collects the estimation error. 
In all panels, purple, red, blue, and black lines mark our proposed method with Gaussian kernel weights, least-squares convex clustering with Gaussian kernel weights, $k$-means, and hierarchical clustering respectively.  
}
\label{overlap-clusters-2}
\end{figure}

\begin{figure}[!t]
\centering
\includegraphics[width=.49\textwidth]{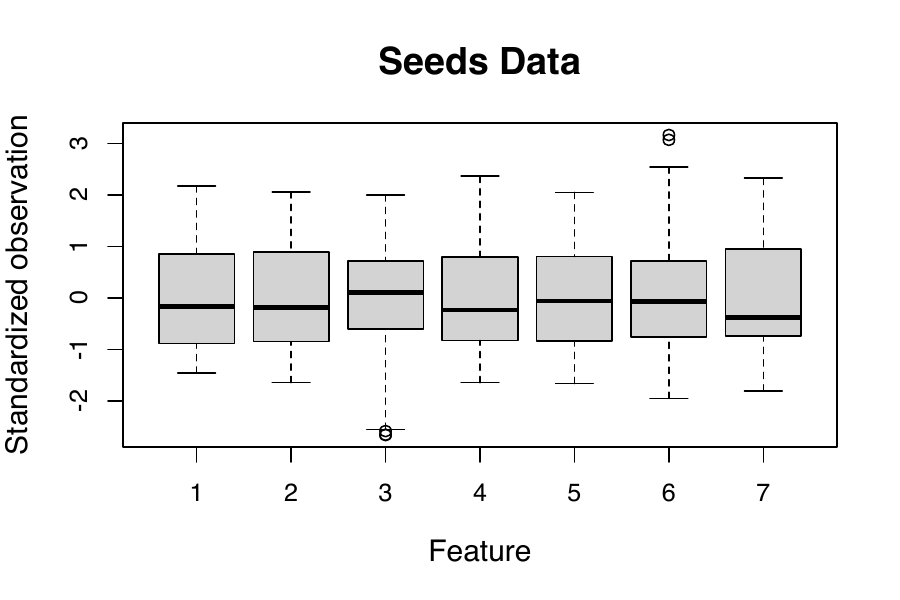}
\includegraphics[width=.49\textwidth]{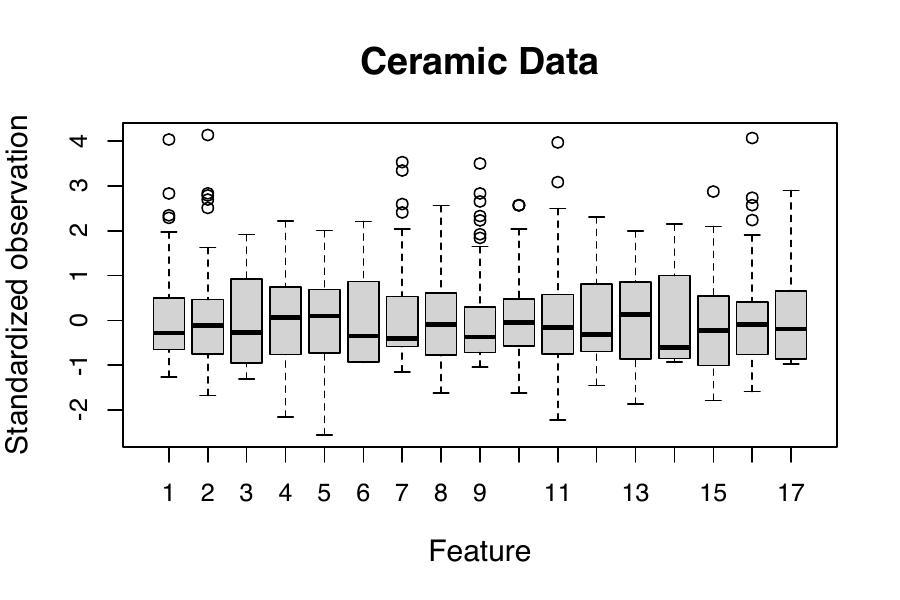} \\ 
\includegraphics[width=.98\textwidth]{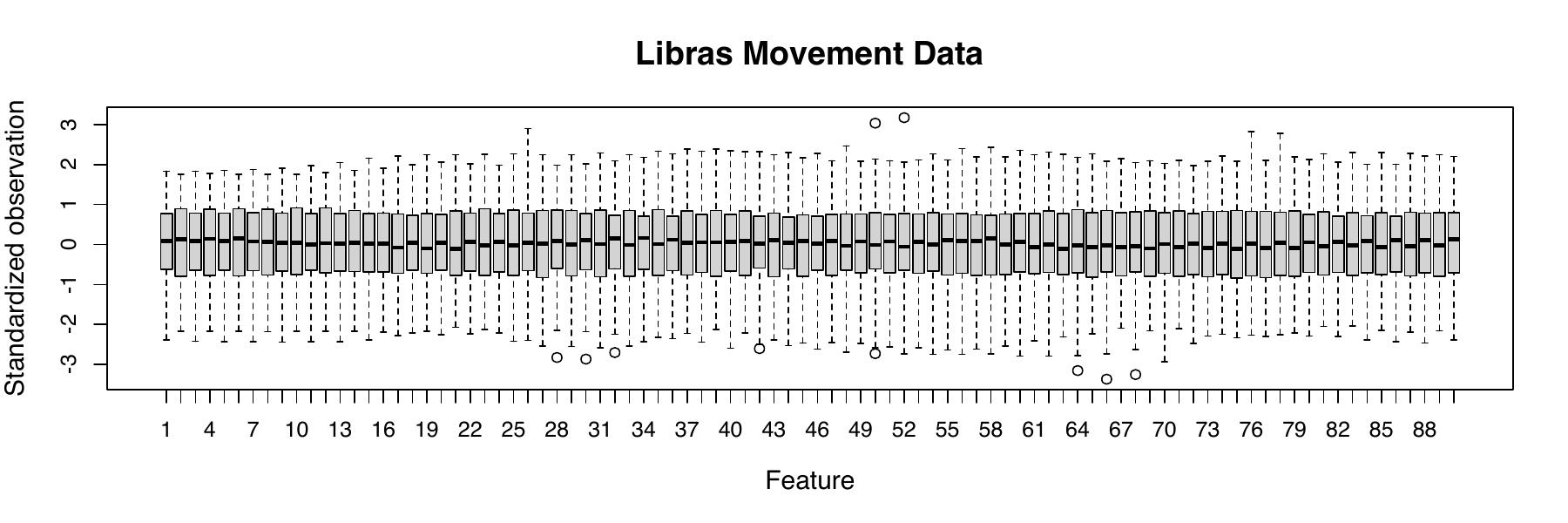}
\caption{Boxplots for three real-world data sets from UCI Machine Learning Repository.}
\label{real-boxplots}
\end{figure}





\newpage
\appendix 
\renewcommand{\theequation}{S.\arabic{equation}}
\renewcommand{\thetable}{S.\arabic{table}}
\renewcommand{\thefigure}{S.\arabic{figure}}
\renewcommand{\thesection}{S.\arabic{section}}
\renewcommand{\thelemma}{S.\arabic{lemma}}

\vspace{30pt}
\noindent{\bf \LARGE Appendix}

\paragraph{Norms.} We first provide explicit definitions for the norms we use in the paper. 
For a generic $p$-dimensional vector $\boldsymbol{v}=(v_1, \ldots, v_p)$, the $\ell_q$ norm for $q\geq1$ is defined as $\|\boldsymbol{v}\|_q=(\sum_{i=1}^p |v_i|^q)^{1/q}$.  
For a generic $n\times p$ matrix $\Ab$ with $a_{ij}$ being its $(i,j)$-th entry, the Frobenius norm is defined as $\|\Ab\|_{\rF}=\sqrt{\sum_{i=1}^n \sum_{j=1}^p a_{ij}^2}$.

\paragraph{HA Rand Index.} We also give a precise definition of the HA Rand index we use as a measure of clustering accuracy. 
Given a set of $n$ elements $O=\{o_1, \ldots, o_n\}$, consider two partitions of $O$: $X = \{X_1, X_2, \ldots, X_r\}$ that clusters $O$ into $r$ subsets and $Y = \{Y_1, Y_2, \ldots, Y_s\}$ that clusters $O$ into $s$ subsets, where each $X_i$ and $Y_j$ are the clusters in the two partitions.
The HA Rand Index is calculated using the following formula:
\[
\frac{\binom{n}{2} \sum_{i,j} \binom{n_{ij}}{2} - \sum_i \binom{n_{i\cdot}}{2} \sum_j \binom{n_{\cdot j}}{2}}{\frac{1}{2} \binom{n}{2} \left[ \sum_i \binom{n_{i\cdot}}{2} + \sum_j \binom{n_{\cdot j}}{2} \right] - \sum_i \binom{n_{i\cdot}}{2} \sum_j \binom{n_{\cdot j}}{2}},
\]
where $n_{ij}$ is the number of elements that are in both $X_i$ and $Y_j$; 
$n_{i\cdot}=\sum_j n_{ij}$ is the number of elements in $X_i$; $n_{\cdot j}=\sum_i n_{ij}$ is the number of elements in $Y_j$.
The value of the HA Rand index ranges from $-1$ to $1$, with a value close to $1$ indicating strong agreement between the true and estimated clusters.

\section{Derivation of Algorithm~\ref{Alg:huberadmm}}\label{app:1}
We give some details on Algorithm \ref{Alg:huberadmm}. Recall that the scaled augmented Lagrangian function for \eqref{eq:ccl4} takes the form
\begin{align*}
L_{\tau}(\Wb,\Vb,\Ub,\Yb,\Zb)&=
\sum_{i=1}^n \ell_\tau (\Xb_{i}- \Wb_{i})  + \lambda \sum_{i<i'}w_{ii'}\|\Vb_{ii'} \|_2\\
&+ \sum_{i<i' }\frac{\rho}{2} \|\Vb_{ii'}  - (\Ub_{i} - \Ub_{i'}) + \Yb_{ii'}    \|_2^2+ \frac{\rho}{2} \|\Wb  - \Ub + \Zb    \|_{\rF}^2
\end{align*}
The alternating direction method of multipliers algorithm requires the following updates:
\begin{align*}
\Wb^{(t+1)}&= \underset{\Wb}{\mathrm{argmin}}\;\mathit{L}_{\tau}(\Wb,\Vb^{(t)},\Ub^{(t)},\Yb^{(t)},\Zb^{(t)});\\
\Vb^{(t)}&= \underset{\Vb}{\mathrm{argmin}}\;\mathit{L}_{\tau}(\Wb^{(t+1)},\Vb,\Ub^{(t)},\Yb^{(t)},\Zb^{(t)});\\
\Ub^{(t+1)}&= \underset{\Ub}{\mathrm{argmin}}\;\mathit{L}_{\tau}(\Wb^{(t+1)},\Vb^{(t+1)},\Ub,\Yb^{(t)},\Zb^{(t)});\\
\Yb_{ii'}^{(t+1)}&= \Yb_{ii'}^{(t)}-\rho (\Ub^{(t+1)}_{i}-\Ub^{(t+1)}_{i'}-\Vb^{(t+1)}_{ii'});\\
\Zb^{(t+1)}&= \Zb^{(t)}-\rho(\Ub^{(t+1)}-\Wb^{(t+1)}).
\end{align*}
We now derive the updates for $\Wb$, $\Vb$, and $\Ub$.

\textbf{Update for $\Wb$:}
An update for $\Wb$ can be obtained by solving the following minimization problem:
\begin{equation*}
\underset{\Wb}{\mathrm{minimize}}~\sum_{i=1}^n \ell_\tau (\Xb_{i}- \Wb_{i}) +\frac{\rho}{2}\|\Wb-\Ub+\Zb\|_{\rF}^{2}.
\end{equation*}
The above problem can be solved element-wise:
\begin{equation}
\label{w1}
\underset{W_{ij}}{\mathrm{minimize}}\;l_{\tau}(X_{ij}-W_{ij})+\frac{\rho}{2}(W_{ij}-U_{ij}+Z_{ij})^2
\end{equation}
Due to the use of Huber loss, there are two different cases: (i) $|X_{ij}-W_{ij}| \leq \tau$; and (ii) $|X_{ij}-W_{ij}| > \tau$.  

For the case when $|X_{ij}-W_{ij}| \leq \tau$, \eqref{w1} reduces to
\begin{equation*}
\underset{W_{ij}}{\mathrm{minimize}}~ \frac{1}{2}(X_{ij}-W_{ij})^{2}+\frac{\rho}{2}(W_{ij}-U_{ij}+Z_{ij})^2.
\end{equation*}
Thus, we have $\hat{W}_{ij}=\{X_{ij}+\rho(U_{ij}-Z_{ij})\}/(1+\rho)$.
Substituting this into the constraint $|X_{ij}-W_{ij}| \leq \tau$, we obtain $|\rho\{X_{ij}-(U_{ij}-Z_{ij})\}/(1+\rho)|\leq \tau$. 
Thus,
\[
\hat{W}_{ij}=\{X_{ij}+\rho(U_{ij}-Z_{ij})\}/(1+\rho),\qquad  \mathrm{if}~ |\rho\{X_{ij}-(U_{ij}-Z_{ij})\}/(1+\rho)|\leq \tau.
\]

For the case $|X_{ij}-W_{ij}| > \tau$, we solve the problem
\begin{equation*}
\underset{W_{ij}}{\mathrm{minimize}}~\tau|X_{ij}-W_{ij}|+\frac{\rho}{2}(W_{ij}-U_{ij}+Z_{ij})^2.
\end{equation*}
To this end, let $H_{ij}=X_{ij}-W_{ij}$. By a change of variable, we have
\begin{equation*}
\underset{H_{ij}}{\mathrm{minimize}}~\frac{\tau}{\rho}|H_{ij}|+\frac{1}{2}(X_{ij}-H_{ij}-U_{ij}+Z_{ij})^2
\end{equation*}
It can be shown that $\hat{H}_{ij}=S\{X_{ij}-(U_{ij}-Z_{ij}),\tau/\rho\}$, where
$S(a,b)= \mathrm{sign}(a) \max (|a|-b,0)$ is the soft-thresholding operator. 
Thus we have 
\begin{equation*}
\hat{W}_{ij}=X_{ij}-S\{X_{ij}-(U_{ij}-Z_{ij}),\tau/\rho\},\qquad  \mathrm{if}~ |\rho\{X_{ij}-(U_{ij}-Z_{ij})\}/(1+\rho)|> \tau.
\end{equation*}

\textbf{Update for $\Vb$:}
For each pair of $i<i'$, we update $\Vb_{ii'}$ by solving the problem:
\begin{equation*}
\underset{\Vb_{ii'}}{\mathrm{minimize}}~ \frac{\lambda w_{ii'}}{\rho} \|\Vb_{ii'}\|_{2}+\frac{1}{2}\|\Vb_{ii'}-(\Ub_{i}-\Ub_{i'})+\Yb_{ii'}\|_{2}^{2}.
\end{equation*}
This is a standard group lasso problem with the following update:
\begin{equation*}
\hat{\Vb}_{ii'}= \left[1-\frac{\lambda w_{ii'}}{\rho \|\Ub_{i}-\Ub_{i'}-\Yb_{ii'} \|_2}\right]_+ (\Ub_{i}-\Ub_{i'}-\Yb_{ii'}),
\end{equation*}
where $[ a]_+ = \max(0,a)$.

\textbf{Update for $\Ub$:}
To update $\Ub$, we solve 
\begin{equation}
\label{u1}
\underset{\Ub \in \RR^{n\times p}}{\mathrm{minimize}}~ \sum_{i<i'}\frac{\rho}{2}\|\Vb_{ii'}-(\Ub_{i}-\Ub_{i'})+\Yb_{ii'}\|_{2}^{2}+\frac{\rho}{2}\|\Wb-\Ub+\Zb\|_{\rF}^{2}.
\end{equation}
To simplify the expression above, we construct an ${n\choose 2}\times n$ matrix $\Eb$ such that $(\Eb \Ub)_{ii'} = \Ub_i-\Ub_{i'}$. 
Then, \eqref{u1} is equivalent to
\begin{equation*}
\underset{\Ub\in \RR^{n\times p}}{\mathrm{minimize}}~\|\Vb+\Yb-\Eb\Ub\|_{\rF}^2 + \|\Wb-\Ub+\Zb\|_{\rF}^2.
\end{equation*}
Solving the above yields
\begin{equation*}
\hat{\Ub}=(\Eb^{\T}\Eb+\Ib)^{-1}\{\Eb^{\T}(\Vb+\Yb)+(\Wb+\Zb)\}.
\end{equation*}

\section{Proof of Theorem~\ref{thm:ls} }\label{appendix:b0}

We construct a contaminated data $\widetilde \Xb$ with only the $(1,1)$-th entry contaminated.  Specifically, for $\forall M>1$, let $\widetilde X_{11}=X_{11}+M$  with all other entries in $\tilde{\mathbf X}$ being the same with that of $\mathbf X$, and we will have $\widehat U^{\rm ls}_{11}(\widetilde \Xb)=\widetilde X_{11}=X_{11}+M$, resulting in $\|\hat {\mathbf{U}}^{\rm ls}(\tilde{\mathbf X}) -\hat {\mathbf{U}}^{\rm ls}({\mathbf X}) \|_\rF\geq M$.

By the optimality conditions, we have 
\$
(\widehat\Ub_1^{\rm ls} - \tilde \Xb_1) - \lambda\sum_{j=2}^n \zb^{(1,j)} = 0, 
\$
where 
$
\zb^{(1,j)}\in \partial \|\widehat\Ub_1^{\rm ls}-\widehat\Ub_j^{\rm ls}\|_2
$
and 
\$
\partial \|\widehat\Ub_1^{\rm ls}-\widehat\Ub_j^{\rm ls}\|_2 
=\begin{cases}
\frac{\widehat\Ub_1^{\rm ls}-\widehat\Ub_j^{\rm ls}}{\|\widehat\Ub_1^{\rm ls}-\widehat\Ub_j^{\rm ls}\|_2} & {\rm if} ~\widehat\Ub_1^{\rm ls}-\widehat\Ub_j^{\rm ls}\ne 0, \\
\{\Vb\in \RR^p : \|\Vb\|_2\leq 1\} &  {\rm if} ~\widehat\Ub_1^{\rm ls}-\widehat\Ub_j^{\rm ls}= 0. 
 \end{cases}
\$
Observe that, for any $\Vb\in \partial \|\widehat\Ub_1^{\rm ls}-\widehat\Ub_j^{\rm ls}\|_2 $, we have $\|\Vb\|_2\leq 1$.  Thus we have 
\$
\|\widehat\Ub_1^{\rm ls} - \tilde \Xb_1\|_2 =\| \lambda\sum_{j=2}^n \zb^{(1,j)}\|_2\leq \lambda (n-1)<\infty.
\$
Now if $\widehat\Ub^{\rm ls}$ is bounded, and thus  is $\widehat\Ub_1^{\rm ls}$. We have 
\$
 M -\|\Xb_1\|_2 - \|\widehat\Ub_1^{\rm ls}\|_2 \leq \|\tilde \Xb_1\|_2 - \|\widehat\Ub_1^{\rm ls}\|_2 \leq \lambda (n-1)<\infty.
\$
Taking $M\rightarrow \infty$, we arrive at a contradiction. Thus $\widehat\Ub^{\rm ls}$ must be unbounded. Thus the breakdown point is $1/n$. 
\section{Proof of Theorem~\ref{thm:bp} (upper bound part)}
\begin{lemma}
\label{lemma:bound}
For any two scalars $s$ and $t$, we have
$
\ell_\tau(s+t)\leq \ell_{\tau}(s)+\ell_\tau(t)+\tau^2
$.
\end{lemma}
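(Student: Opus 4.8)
The plan is to avoid a brute-force case analysis over the relative positions of $|s|$, $|t|$, and $|s+t|$ against $\tau$, and instead to isolate two elementary global properties of the Huber loss and combine them. The first property I would record is that $\ell_\tau$ is $\tau$-Lipschitz: since $\ell_\tau$ is differentiable with $\ell_\tau'(a)=a$ for $|a|\le\tau$ and $\ell_\tau'(a)=\tau\,\sgn(a)$ for $|a|>\tau$, its derivative satisfies $|\ell_\tau'(a)|\le\tau$ everywhere, so $|\ell_\tau(x)-\ell_\tau(y)|\le\tau|x-y|$ for all $x,y$. The second property is the linear lower bound $\ell_\tau(a)\ge \tau|a|-\tau^2$, equivalently $\tau|a|\le\ell_\tau(a)+\tau^2$, for every scalar $a$.

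To establish the second property I would just check the two regimes in the definition of $\ell_\tau$. When $|a|>\tau$ one has $\ell_\tau(a)=\tau|a|-\tfrac12\tau^2\ge\tau|a|-\tau^2$ immediately. When $|a|\le\tau$ one has $\ell_\tau(a)=\tfrac12 a^2$, and the inequality $\tfrac12a^2\ge\tau|a|-\tau^2$ follows from the identity $\tfrac12 a^2-\tau|a|+\tau^2=\tfrac12(|a|-\tau)^2+\tfrac12\tau^2\ge 0$. Either regime gives $\tau|a|\le\ell_\tau(a)+\tau^2$.

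With these two facts in hand the lemma is a two-line combination. Writing $\ell_\tau(s+t)=\ell_\tau(s)+\big[\ell_\tau(s+t)-\ell_\tau(s)\big]$ and applying the Lipschitz bound to the bracket with increment $t$ gives $\ell_\tau(s+t)\le\ell_\tau(s)+\tau|t|$. Then the linear lower bound applied at $a=t$ yields $\tau|t|\le\ell_\tau(t)+\tau^2$, and substituting produces $\ell_\tau(s+t)\le\ell_\tau(s)+\ell_\tau(t)+\tau^2$, as claimed.

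I do not expect a genuine obstacle here; the only point requiring care is the verification of the linear lower bound in the quadratic regime $|a|\le\tau$, and the recognition that peeling off a single increment via Lipschitz continuity, rather than trying to bound $\ell_\tau(s+t)$ directly, is what keeps the argument short. If one preferred a more self-contained route, the same two ingredients can be reread through the Moreau-envelope identity $\ell_\tau(a)=\min_\theta\{\tfrac12(a-\theta)^2+\tau|\theta|\}$, but the Lipschitz-plus-lower-bound route is the cleanest and is the one I would write up.
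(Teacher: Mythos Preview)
Your proof is correct and is genuinely different from the paper's. The paper establishes the inequality by a six-case analysis, splitting according to whether each of $|s|$, $|t|$, $|s+t|$ lies above or below $\tau$, and verifying the bound directly in each case. Your route instead extracts two global facts about the Huber loss --- the $\tau$-Lipschitz property $|\ell_\tau(x)-\ell_\tau(y)|\le\tau|x-y|$ and the linear lower bound $\tau|a|\le\ell_\tau(a)+\tau^2$ --- and chains them in two lines. This is considerably cleaner and more conceptual: the Lipschitz step isolates the increment in $t$, and the lower bound converts $\tau|t|$ back into $\ell_\tau(t)$ plus the slack $\tau^2$. The advantage of your argument is brevity and the fact that it would transfer unchanged to any loss with bounded derivative and a matching linear lower envelope; the paper's case analysis, by contrast, yields slightly sharper constants in several of the six cases (e.g., $\tfrac12\tau^2$ rather than $\tau^2$), though these refinements are not used downstream.
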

\begin{proof}[Proof of Lemma \ref{lemma:bound}]
\label{appendix:d}
We break the proof into $6$ cases. 

Case 1. Suppose that $|s+t|\leq \tau$, $|s|\leq \tau$ and $|t|\leq \tau$. In this case, we have
\$
\ell_\tau(s+t)&=\frac{1}{2}(s+t)^2= \frac{1}{2}s^2+\frac{1}{2}t^2+st\\
&\leq \frac{1}{2}s_2^2+\frac{1}{2}t^2+\tau^2=\ell_{\tau}(s)+\ell_\tau(t)+\tau^2.
\$

Case 2. Suppose that $|s+t|\leq \tau$, $|s|\geq \tau$ and $|t|\leq \tau$. In this case, we must have $st\leq 0$ and thus  
\$
\ell_\tau(s+t)&=\frac{1}{2}(s+t)^2=\frac{1}{2}s^2+\frac{1}{2}t^2+st\\
&\leq \frac{1}{2}s^2+\frac{1}{2}t^2=\ell_{\tau}(s)+\ell_\tau(t).
\$

Case 3. Suppose that $|s+t|\leq \tau$, $|s|\geq \tau$ and $|t|\geq \tau$. Similar to Case 2, we must have $st\leq 0$ and thus 
\$
\ell_\tau(s+t)\leq\ell_{\tau}(s)+\ell_\tau(t).
\$

Case 4. Suppose that $|s+t|> \tau$, $|s|\leq \tau$ and $|t|\leq \tau$. In this case, we must have $st\geq 0$. Without loss of generality, we assume $0<s,\ t<\tau$. Therefore, we have
\$
\ell_\tau(s+t)&=\tau|s+t|-\frac{1}{2}\tau^2= \tau(s+t)-\frac{1}{2}\tau^2 \\
&\leq \frac{1}{2}s^2+\frac{1}{2}t^2+\frac{1}{2}\tau^2=\ell_{\tau}(s)+\ell_\tau(t)+\frac{1}{2}\tau^2,
\$
where the last inequality is due to the fact that $s^2+t^2-2\tau(s+t)+2\tau^2\geq 0.$

Case 5. Suppose that $|s+t|> \tau$, $|s|\leq \tau$ and $|t|> \tau$. In this case, we have
\$
\ell_\tau(s+t)&=\tau|s+t|-\frac{1}{2}\tau^2\leq \frac{1}{2}s^2+\tau|t|-\frac{1}{2}\tau^2+\tau^2 \\
&=\ell_{\tau}(s)+\ell_\tau(t)+\tau^2.
\$

Case 6. Suppose that $|s+t|> \tau$, $|s|> \tau$ and $|t|> \tau$. In this case, we have
\$
\ell_\tau(s+t)&=\tau|s+t|-\frac{1}{2}\tau^2\leq \tau|2|-\frac{1}{2}\tau^2+\tau|t|-\frac{1}{2}\tau^2+\frac{1}{2}\tau^2 \\
&=\ell_{\tau}(s)+\ell_\tau(t)+\frac{1}{2}\tau^2.
\$

Combining all the results above in different cases completes the proof. 
\end{proof}
\begin{lemma}\label{lemma:bound2}
For any $u_k\rightarrow u>0$ and $t_k\rightarrow \infty$, we have 
$
\lim_{k\rightarrow \infty} {\ell_\tau(t_ku_k)}/{\ell_\tau(t_k)}=u.
$
\end{lemma}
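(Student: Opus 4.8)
The plan is to exploit the fact that the Huber loss is asymptotically linear, so that for large arguments it behaves like $\tau|\cdot|$, and the ratio in question reduces to a ratio of linear-plus-constant terms whose limit is transparent. The only facts I need are the explicit piecewise form \eqref{eq:ccl2} and elementary properties of limits.

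First I would establish that both arguments eventually land in the linear regime of the Huber loss. Since $t_k \to \infty$, there is $K_1$ with $t_k > \tau > 0$ for all $k \ge K_1$, so that $\ell_\tau(t_k) = \tau t_k - \tfrac{1}{2}\tau^2$. Next, because $u_k \to u > 0$, for large $k$ we have $u_k > u/2 > 0$, whence $t_k u_k > (u/2)\,t_k \to \infty$; in particular there is $K_2$ with $t_k u_k > \tau > 0$ for $k \ge K_2$, giving $\ell_\tau(t_k u_k) = \tau t_k u_k - \tfrac{1}{2}\tau^2$. This step — verifying $t_k u_k \to +\infty$ — is the one place where the hypothesis $u > 0$ is essential, and it is the only subtlety in the argument; if $u$ were allowed to be zero the product need not diverge and the conclusion could fail.

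Then, for $k \ge \max(K_1, K_2)$, I would form the ratio directly:
\[
\frac{\ell_\tau(t_k u_k)}{\ell_\tau(t_k)}
= \frac{\tau t_k u_k - \tfrac{1}{2}\tau^2}{\tau t_k - \tfrac{1}{2}\tau^2}
= \frac{u_k - \tau/(2 t_k)}{1 - \tau/(2 t_k)}.
\]
Finally I would pass to the limit: since $\tau/(2 t_k) \to 0$ and $u_k \to u$, the numerator tends to $u$ and the denominator tends to $1$, so by the algebra of limits the ratio converges to $u$, as claimed. No genuine obstacle arises beyond the bookkeeping of the threshold indices and the positivity check noted above.
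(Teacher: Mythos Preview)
Your argument is correct. The paper's own proof is simply the one-line remark that the result follows from L'H\^opital's rule, with details omitted. Your approach is different in that you bypass L'H\^opital entirely: you observe that both $t_k$ and $t_ku_k$ eventually exceed $\tau$, so both numerator and denominator sit in the linear branch of \eqref{eq:ccl2}, and then a direct algebraic simplification gives the limit. This is arguably cleaner and more elementary---it uses nothing beyond the explicit piecewise definition and the algebra of limits---whereas invoking L'H\^opital requires differentiating $\ell_\tau$ and is mild overkill for what is ultimately a ratio of affine functions of $t_k$. Both routes are short; yours makes the role of the hypothesis $u>0$ (needed to force $t_ku_k\to\infty$) more transparent.
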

\begin{proof}[Proof of Lemma \ref{lemma:bound2}]
The proof of this lemma is a direct application of L'Hopital's Rule and thus is omitted. 
\end{proof}
\begin{lemma}\label{lemma:bound3}
For any $u_k\rightarrow u>0$ and $t_k\rightarrow \infty$, we have 
$
\lim_{k\rightarrow \infty} {t_ku_k}/{\ell_\tau(t_k)}=u/\tau.
$
\end{lemma}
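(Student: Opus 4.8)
The plan is to exploit the piecewise definition of the Huber loss in \eqref{eq:ccl2}: because $t_k\to\infty$, the argument $t_k$ eventually lies in the linear region of $\ell_\tau$, where the loss has an explicit closed form, and the claimed limit then drops out of an elementary computation.

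First I would fix the regime. Since $\tau>0$ is a constant and $t_k\to\infty$, there exists an index $K$ such that $t_k>\tau>0$ for all $k\ge K$. For every such $k$ the second branch of \eqref{eq:ccl2} applies, giving
\[
\ell_\tau(t_k)=\tau t_k-\tfrac{1}{2}\tau^2 .
\]
In particular $\ell_\tau(t_k)\ge \tfrac{1}{2}\tau^2>0$, so the ratio in the statement is well defined for all $k\ge K$ and the denominator stays bounded away from zero.

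Next I would rewrite the ratio for $k\ge K$ by factoring $t_k$ out of the denominator:
\[
\frac{t_k u_k}{\ell_\tau(t_k)}
=\frac{t_k u_k}{\tau t_k-\tfrac{1}{2}\tau^2}
=\frac{u_k}{\tau-\tau^2/(2t_k)} .
\]
Taking $k\to\infty$ and using $u_k\to u$ in the numerator together with $\tau^2/(2t_k)\to 0$ in the denominator (here $\tau$ is fixed while $t_k\to\infty$), the denominator converges to $\tau$ and the numerator to $u$, whence the ratio converges to $u/\tau$, as claimed. This also sidesteps the L'Hopital computation used for Lemma~\ref{lemma:bound2}, since once we are in the linear regime no derivatives are needed.

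There is essentially no hard step here; the only point requiring care is justifying that we may pass to the linear branch of the loss, which is handled cleanly by the threshold $t_k>\tau$ for $k\ge K$, and confirming that the lower-order term $\tau^2/(2t_k)$ genuinely vanishes, which holds because $\tau$ is a fixed positive constant. I would state both of these explicitly so that the limit of the quotient is a legitimate quotient of limits with nonzero denominator.
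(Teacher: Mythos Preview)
Your argument is correct. Once $t_k>\tau$, the Huber loss is exactly $\tau t_k-\tfrac12\tau^2$, and the limit $u/\tau$ follows by the elementary quotient-of-limits computation you wrote down; no step is missing.

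The paper does not write out a proof either: it simply asserts that the result is a direct application of L'H\^opital's rule. Your route is slightly different and in fact cleaner. L'H\^opital treats $t_k u_k/\ell_\tau(t_k)$ as an $\infty/\infty$ form and differentiates to get $u_k/\tau$ in the limit, but this is unnecessary here because the denominator has a fully explicit linear-in-$t_k$ expression once you are past the threshold $\tau$. Your approach avoids invoking L'H\^opital on a sequence (which, strictly speaking, requires passing through a function of a real variable) and makes the role of the fixed $\tau$ and the vanishing lower-order term $\tau^2/(2t_k)$ transparent. Either way the lemma is immediate.
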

\begin{proof}[Proof of Lemma \ref{lemma:bound3}]
The proof of this lemma is a direct application of L'Hopital's Rule and thus is omitted. 
\end{proof}
\label{appendix:b}
Here we give the proof of the upper bound in Theorem~\ref{thm:bp}.

\begin{proof}
An upper bound for the breakdown point. We now sharpen the upper bound in \eqref{bdp:upper}. First, the cost function in \eqref{eq:ccl4} is translation invariant with respect to translation of $\mathbf{1}\ab^\T\in \RR^{n\times p}$ which indicates the obtained estimator is also translation equivariant, i.e.,
\$
\hat\Ub (\Zb+\mathbf{1}\ab^\T)=\hat\Ub(\Zb)+\mathbf{1}\ab^\T
\$
for any data matrix $\Zb$ and any vector $\ab\in \RR^p$. Take $\ab =\eta {\bf 1}$ and let $\Xb^{1,\eta}$ be a data matrix such that the rows are
\$
\left\{\Xb_1,\ldots, \Xb_{n-\lfloor (n+1)/2\rfloor}, \Xb_{n-\lfloor (n+1)/2\rfloor+1}+\eta {\bf 1}, \ldots \Xb_{n}+\eta{\bf 1}\right\},
\$
where ${\bf 1}$ is a vector of all $1$'s. 
Because there are $\lfloor (n+1)/2\rfloor$ 	contaminated rows,  $\Xb^{1,\eta}\in \cP_{\lfloor (n+1)/2\rfloor}(\Xb)$ for any $\eta$. Similarly, let $\Xb^{2,\eta}$ be a data matrix such that the rows are
\$
\left\{\Xb_1-\eta{\bf 1},,\ldots, \Xb_{n-\lfloor (n+1)/2\rfloor}-\eta{\bf 1},, \Xb_{n-\lfloor (n+1)/2\rfloor+1}, \ldots \Xb_{n}\right\}.
\$
Because $\Xb^{2,\eta}$ has $n-\lfloor(n+1)/2\rfloor$ contaminated rows where $n-\lfloor(n+1)/2\rfloor\leq \lfloor(n+1)/2\rfloor$, $\Xb^{2,\eta}\in \cP_{\lfloor(n+1)/2\rfloor}(\Xb)$ for any $\eta$. Moreover, we have $\Xb^{1,\eta}=\Xb^{2,\eta}+\eta{\bf 1}$, where ${\bf 1}$ is a matrix of all $1$'s, with some abuse of notation. 

By triangle inequality, we have 
\$
\left\|\hat\Ub(\Xb^{1,\eta})-\hat\Ub(\Xb^{2,\eta})\right\|_\rF\leq \left\|\hat\Ub(\Xb^{1,\eta})-\hat\Ub(\Xb)\right\|_\rF+\left\|\hat\Ub(\Xb)-\hat\Ub(\Xb^{2,\eta})\right\|_\rF, 
\$
which by translation inequality reduces to
\$
\eta \sqrt{n p} 
&\leq \left\|\hat\Ub(\Xb^{1,\eta})-\hat\Ub(\Xb)\right\|_\rF+\left\|\hat\Ub(\Xb^{2,\eta})-\hat\Ub(\Xb)\right\|_\rF\\
&\leq 2\sup_{\tilde\Xb\in \cP_{\lfloor (n+1)/2\rfloor}(\Xb)}\big\|\widehat\Ub (\tilde\Xb)-\hat{\Ub} (\Xb)\big\|_\rF.
\$ 
Taking $\eta\rightarrow \infty$ acquires 
\$
\sup_{\tilde\Xb\in \cP_{\lfloor (n+1)/2\rfloor}(\Xb)}\big\|\widehat\Ub (\tilde\Xb)-\hat{\Ub} (\Xb)\big\|_\rF=\infty.
\$
This implies 
\$
\frac{m}{n}\leq \frac{\lfloor (n+1)/2\rfloor}{n}.
\$

Therefore, combining the results in both parts gives 
\$
\frac{1}{2}\leq\frac{m}{n}\leq \frac{\lfloor (n+1)/2\rfloor}{n},
\$
which holds under the assumption that
\$
\tau/\lambda\leq  \frac{n-\lfloor(n+1)/2\rfloor}{\sqrt{p}}.
\$

\end{proof}

\section{Proof of Proposition~\ref{prop:wo_penalty}}
\label{appendix:c}
We first take the derivative of the Huber loss function. We know that
\begin{equation}\nonumber
\nabla \ell_\tau(a)=\begin{cases}
a,&|a|\leq \tau\\
\sign(a)\tau,&|a|>\tau
\end{cases}
=\sign(a)\min(\tau,|a|),
\end{equation}
where $\sign(x)= -1$ if $x<0$, $\sign(x)= 1$ if $x>0$, $\sign(x)= 0$ if $x=0$.  
Therefore, the score function of the objective
$\sum_{i=1}^n\ell_\tau(\mathbf X_i-\mathbf U_i)$ is:
\begin{equation}\nonumber
\nabla\ell_\tau(X_{ij}-U_{ij})=-\text{sign}(X_{ij}-U_{ij})\min(\tau,|X_{ij}-U_{ij}|)=0,\forall i=1,...,n,\forall j=1,...,p
\end{equation}
which can be solved only by $\widehat U^{\rm wo}_{ij}(\Xb)=X_{ij}$,  the $(i,j)$-th entry of  $\widehat U^{\rm wo}_{ij}(\Xb)$, $\forall i,j$. 

For a fixed pair $(i,j)$, we construct a contaminated data $\widetilde \Xb$ with only the $(i,j)$-th entry contaminated.  Specifically, for $\forall M>1$, let $\widetilde X_{ij}=X_{ij}+M$  with all other entries in $\tilde{\mathbf X}$ being  exactly the same with that of $\mathbf X$, and we will have $\widehat U^{\rm wo}_{ij}(\widetilde \Xb)=\widetilde X_{ij}=X_{ij}+M$, resulting in $\|\hat {\mathbf{U}}^{\rm wo}(\tilde{\mathbf X}) -\hat {\mathbf{U}}^{\rm wo}({\mathbf X}) \|_\rF\geq M$. Taking $M\rightarrow \infty$, we obtain that $\| \hat{\mathbf U}^{\rm wo}(\tilde{\mathbf X})-\hat{\mathbf U}^{\rm wo}(\mathbf X)     \|_{\rF}\rightarrow \infty$, and hence the breakdown point of $\hat{\mathbf U}^{\rm wo}$ is $1/n$ if considering sample or row-wise, with respect to $\Xb$,  contamination; entry wise contamination. Additionally the breakdown point of $\hat{\mathbf U}^{\rm wo}$ is $1/(np)$ if considering entry-wise contamination.

\section{Hyper-parameters and additional numerical experiments}
We provide the hyper-parameters of the numeric experiments in Table~\ref{hyper}. 
\scolor{
The tolerance parameter $\epsilon$ is usually set to be small. The default choice of our algorithm is 1e-05, presented as tol\_abs in Table~\ref{hyper}. 
$\rho>0$ is the penalty parameter for the ADMM algorithm. 
Small $\rho$ would put less emphasis on maintaining the feasibility of the constraints, while large $\rho$ may put insufficient emphasis on minimizing the objective function. 
As noted in \citet{chi2013splitting}, their ADMM algorithm for convex clustering is guaranteed to converge for any $\rho>0$, which suggests that the algorithm is robust to the choice of $\rho$ to some extent. 
Therefore, to avoid choosing a $\rho$ that is too small or too large, we choose $\rho=1$ in our simulations and real-data analyses. 
}

\begin{table}[!t]
\centering
\caption{Hyper-parameters}
\label{hyper}
\resizebox{\textwidth}{!}{%
\begin{tabular}{lll}
\Xhline{1.2pt}
\textbf{Hyper-parameter} & \textbf{Interpretation}                                 & \textbf{Value} \\ \hline
lam.begin & the initial value of fusion penalty coefficient $\lambda$ & 0.01 \\
lam.step  & the increasing step length of   $\lambda$                 & 1.05 \\
max.log                  & the maximum number of iterations for $\lambda$                                  & 200    \\
rho                      & the nonnegative tuning parameter $\rho$ for the ADMM algorithm & 1              \\
tol\_abs                & the convergence tolerance level $\epsilon$                                      & 1e-05    \\
\Xhline{1.2pt}
\end{tabular}%
}
\end{table}

\scolor{ Figures~\ref{row-wise-1}--\ref{row-wise-3} provide the experiment results for row-wise contamination.}

\begin{figure}[!t]
\centering
\subfigure[varying sample sizes, $p=20$, row-wise contamination = $10\%$, $\tau=1$.]{
\includegraphics[height=0.18\textheight,width=.45\textwidth]{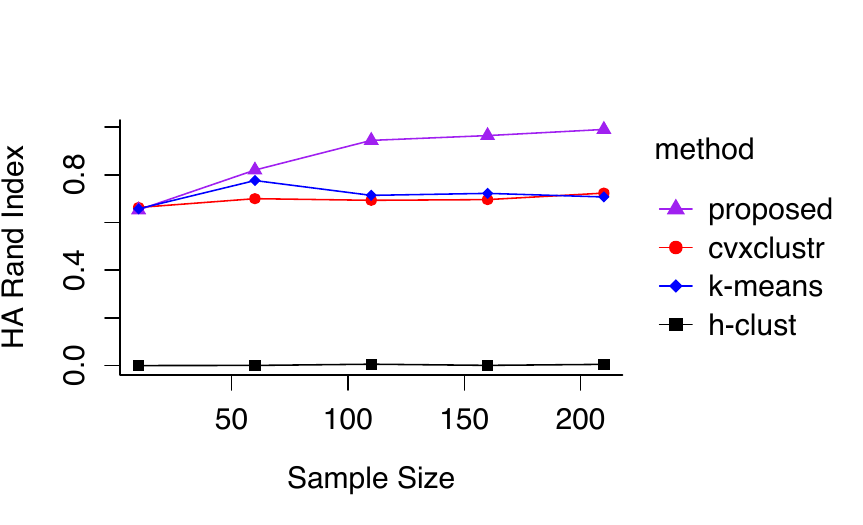}
\hspace{3mm}
\includegraphics[height=0.18\textheight,width=.45\textwidth]{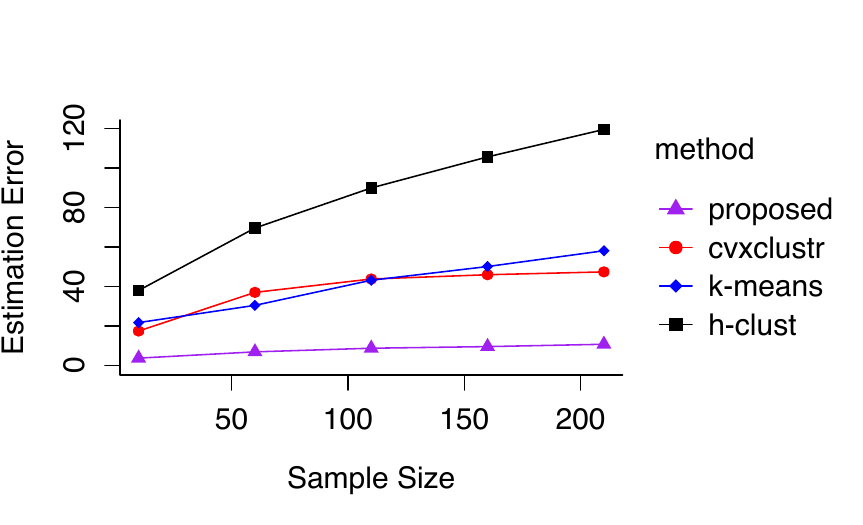}
}
\subfigure[varying feature dimensions, $n = 40$, row-wise contamination = $10\%$, $\tau=0.1$.]{
\includegraphics[height=0.18\textheight,width=.45\textwidth]{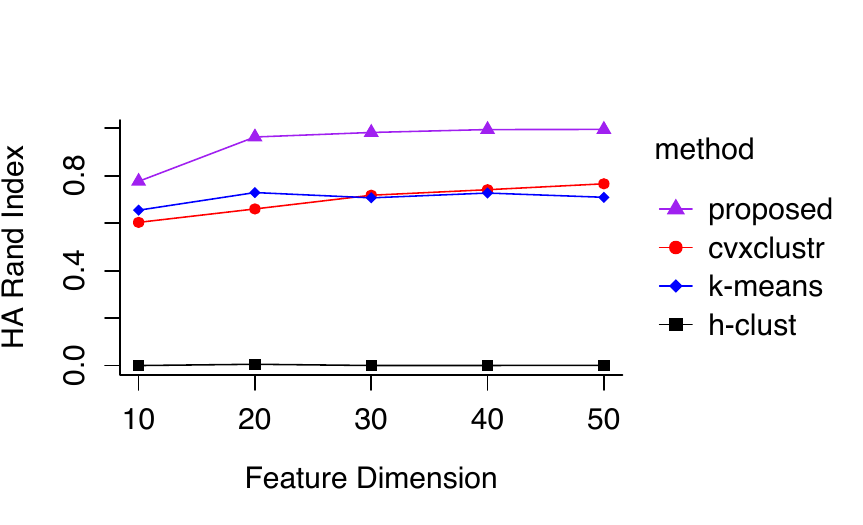}
\hspace{3mm}
\includegraphics[height=0.18\textheight,width=.45\textwidth]{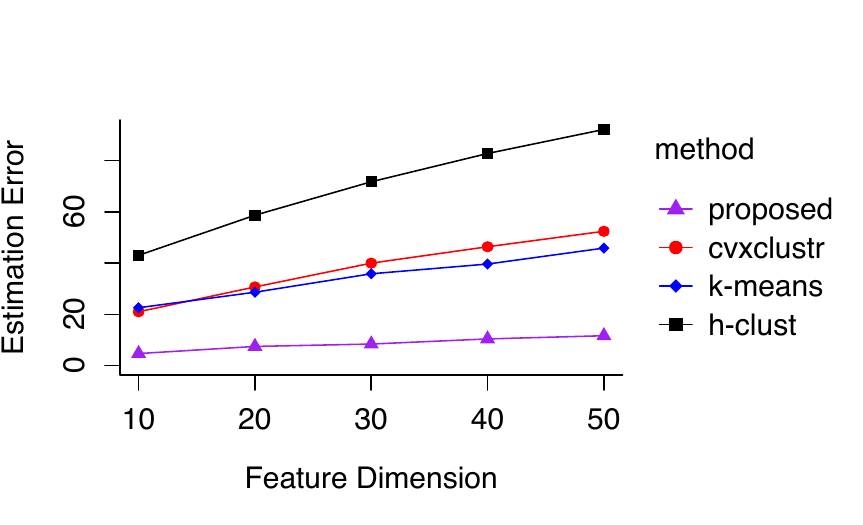}
}
\subfigure[varying sample sizes, $p=20$, row-wise contamination = $50\%$, $\tau=0.6$.]{
\includegraphics[height=0.18\textheight,width=.45\textwidth]{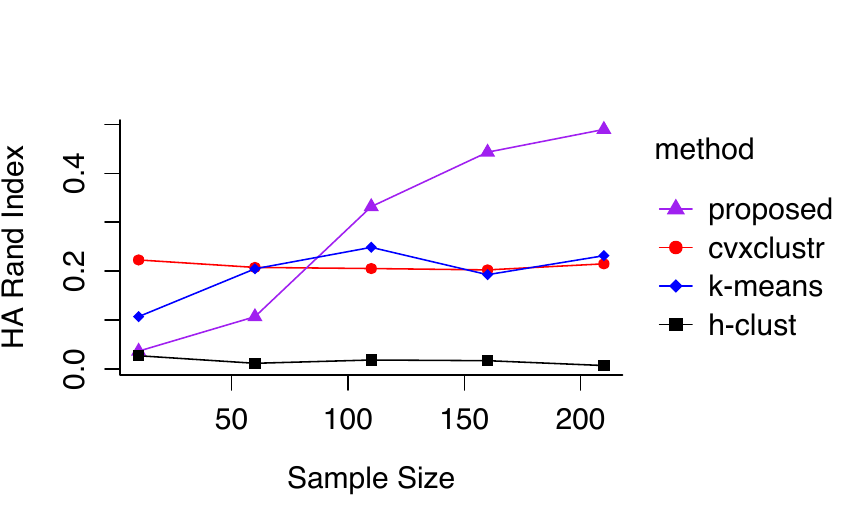}
\hspace{3mm}
\includegraphics[height=0.18\textheight,width=.45\textwidth]{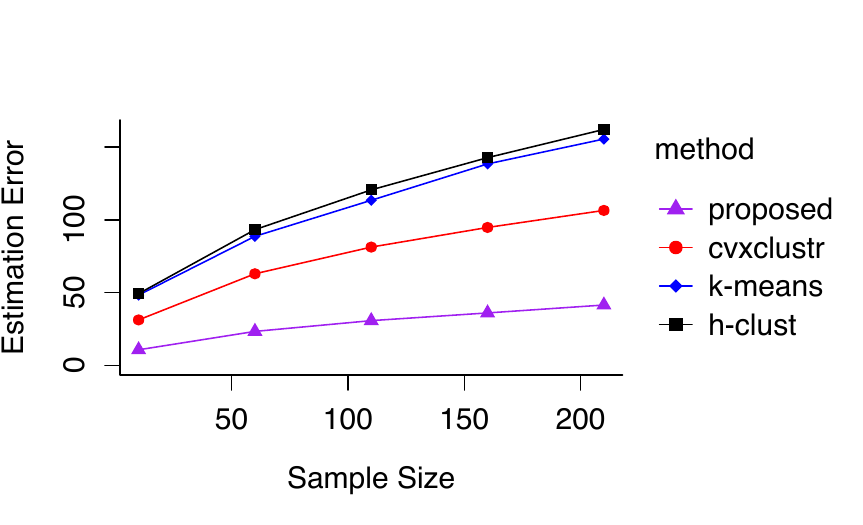}
}
\subfigure[varying feature dimensions, $n = 40$, row-wise contamination = $50\%$, $\tau=0.1$.]{
\includegraphics[height=0.18\textheight,width=.45\textwidth]{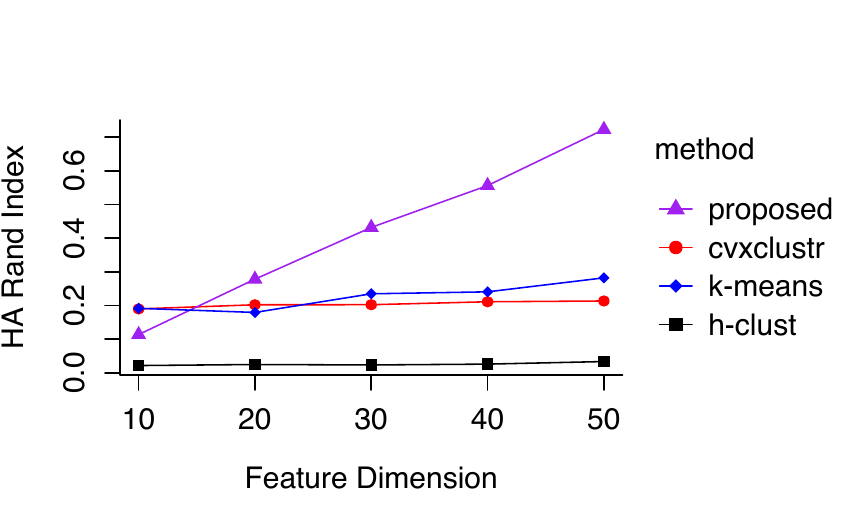}
\hspace{3mm}
\includegraphics[height=0.18\textheight,width=.45\textwidth]{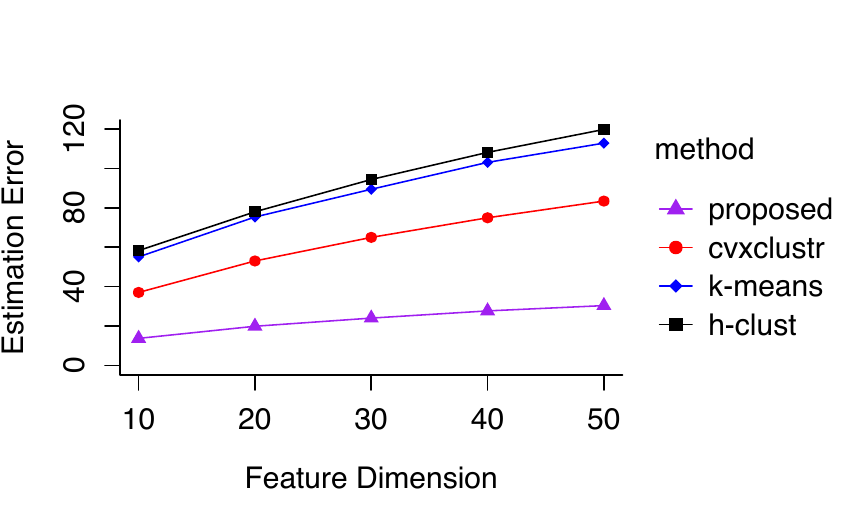}
}
\caption{Comparing our proposed method with others for data with Gaussian noise and uniform outliers with row-wise contamination. The left panel shows the HA Rand index and the right panel collects the estimation error. 
In all panels, purple, red, blue, and black lines mark our proposed method, least-squares convex clustering, $k$-means, and hierarchical clustering respectively. 
}
\label{row-wise-1}
\end{figure}

\begin{figure}[!t]
\centering
\subfigure[varying row-wise outlier proportions, $n = 20, p = 10$, $t$-noise with 5 degrees of freedom, $\tau=0.1$.]{
\includegraphics[height=0.18\textheight,width=.45\textwidth]{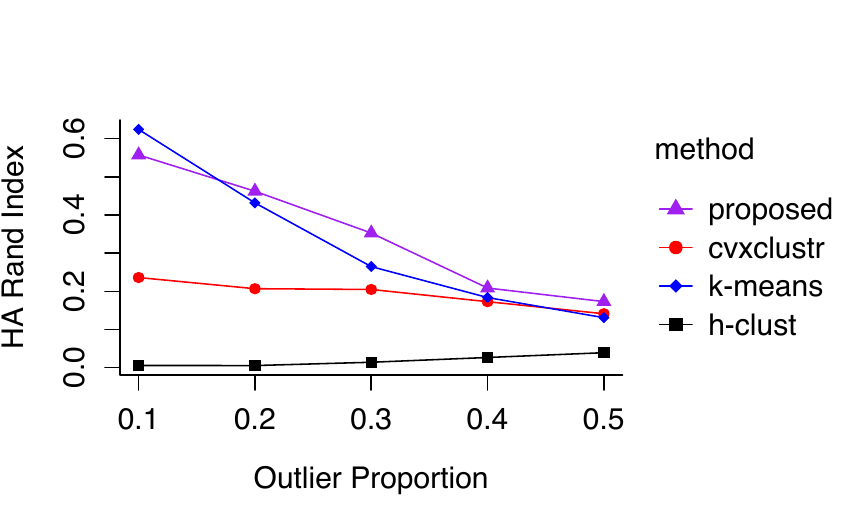}
\hspace{3mm}
\includegraphics[height=0.18\textheight,width=.45\textwidth]{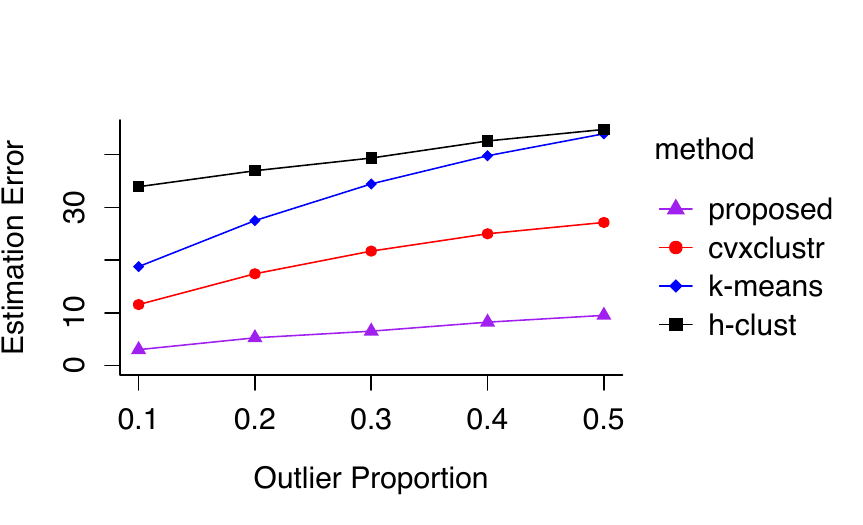}
}
\subfigure[varying row-wise outlier proportions, $n = 40, p = 20$, $t$-noise with 5 degrees of freedom, $\tau=0.1$.]{
\includegraphics[height=0.18\textheight,width=.45\textwidth]{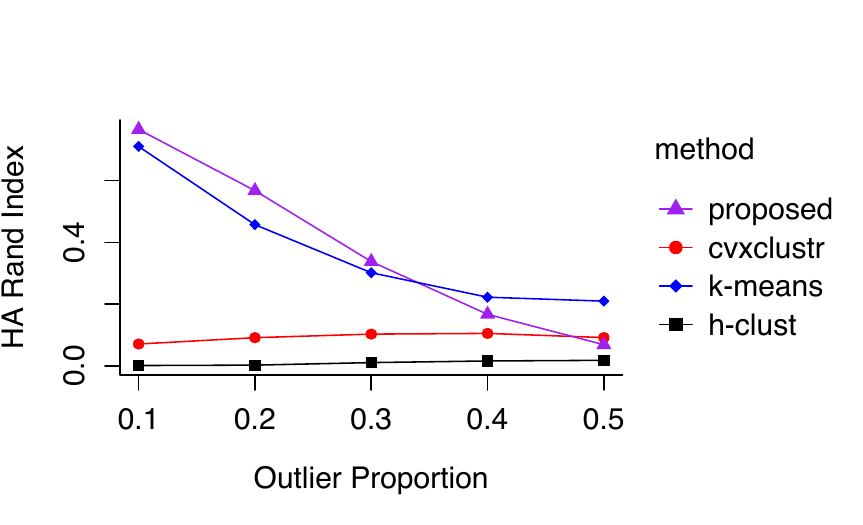}
\hspace{3mm}
\includegraphics[height=0.18\textheight,width=.45\textwidth]{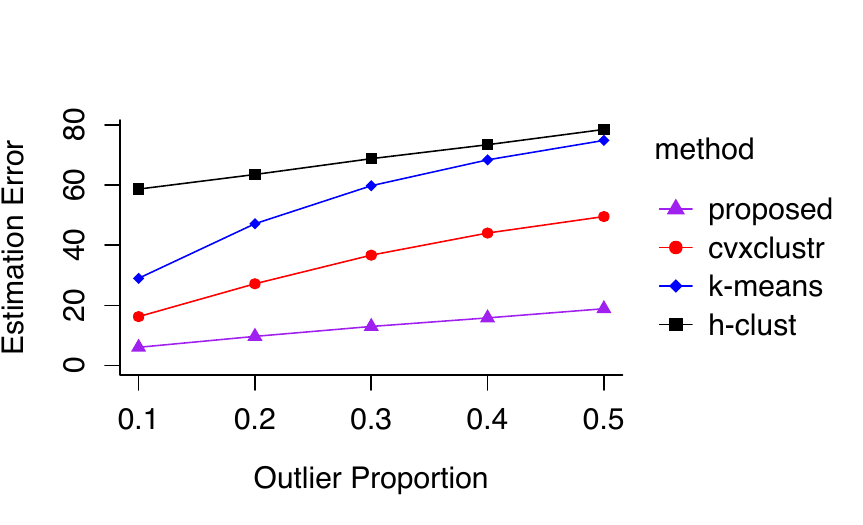}
}
\subfigure[varying degrees of freedom for $t$-noises, $n = 20, p=10$, row-wise contamination = $10\%$, $\tau=0.1$.]{
\includegraphics[height=0.18\textheight,width=.45\textwidth]{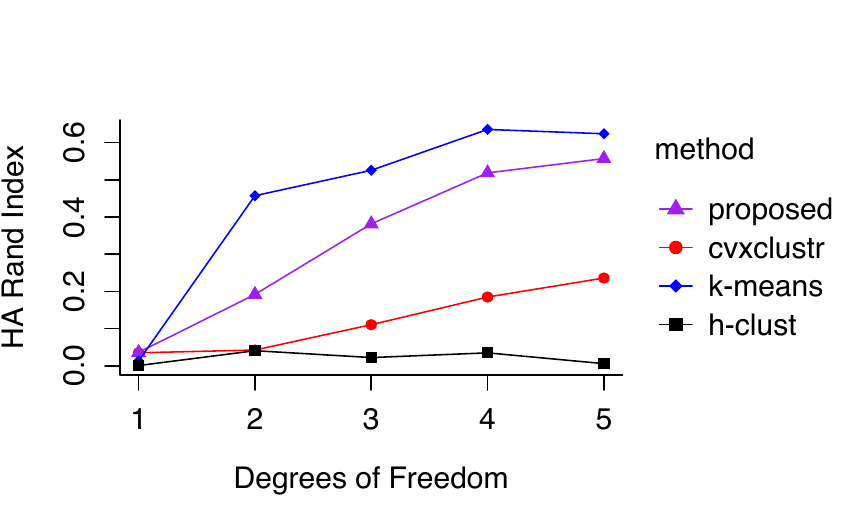}
\hspace{3mm}
\includegraphics[height=0.18\textheight,width=.45\textwidth]{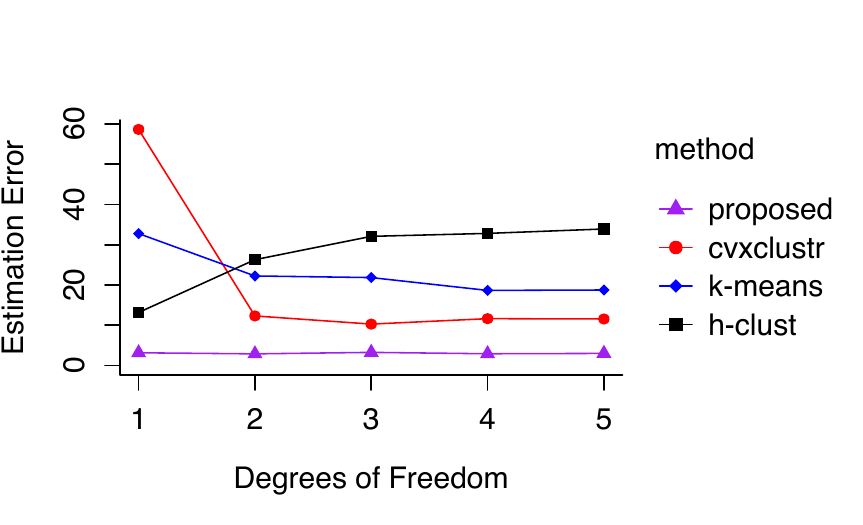}
}
\subfigure[varying degrees of freedom for $t$-noises, $n = 40, p=20$, row-wise contamination = $10\%$, $\tau=0.1$.]{
\includegraphics[height=0.18\textheight,width=.45\textwidth]{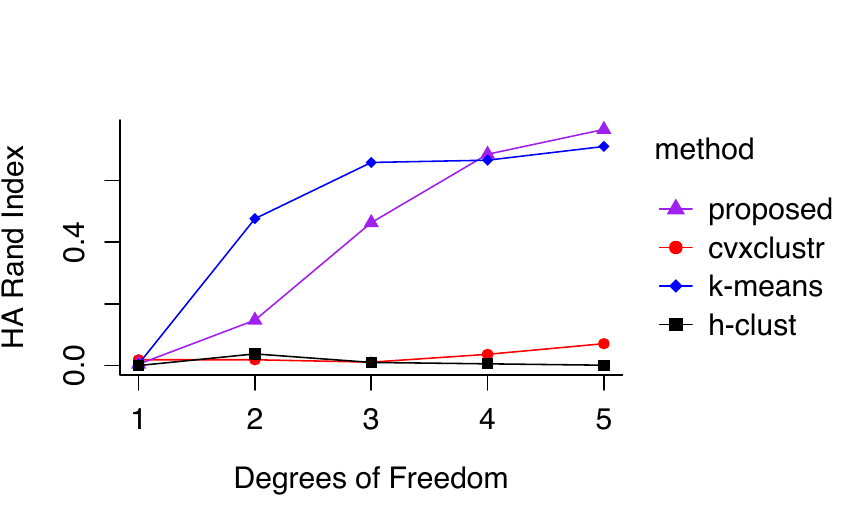}
\hspace{3mm}
\includegraphics[height=0.18\textheight,width=.45\textwidth]{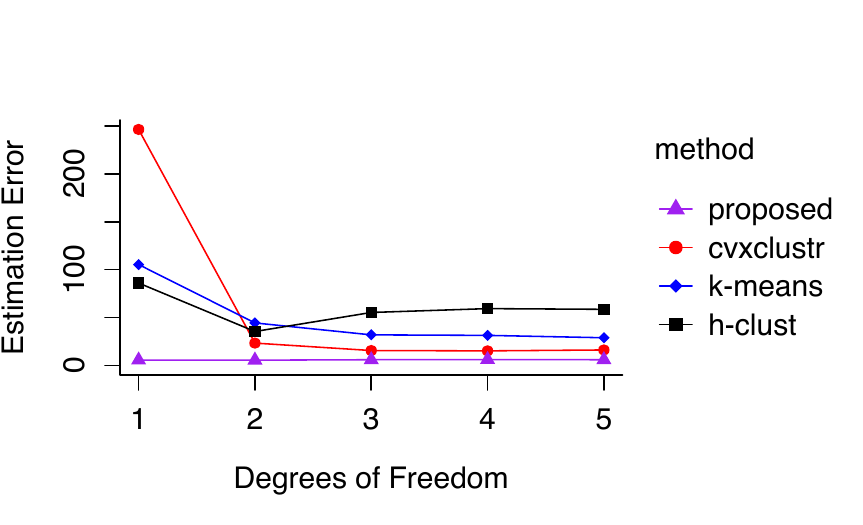}
}
\caption{Comparing our proposed method with others for data with $t$-noise and uniform outliers with row-wise contamination. 
The left panel shows the HA Rand index and the right panel collects the estimation error. 
In all panels, purple, red, blue, and black lines mark our proposed method, least-squares convex clustering, $k$-means, and hierarchical clustering respectively. 
}
\label{row-wise-2}
\end{figure}

\begin{figure}[!t]
\centering
\subfigure[varying sample sizes, $p = 20$, row-wise contamination = $10\%$, $\tau=1$.]{
\includegraphics[height=0.18\textheight,width=.45\textwidth]{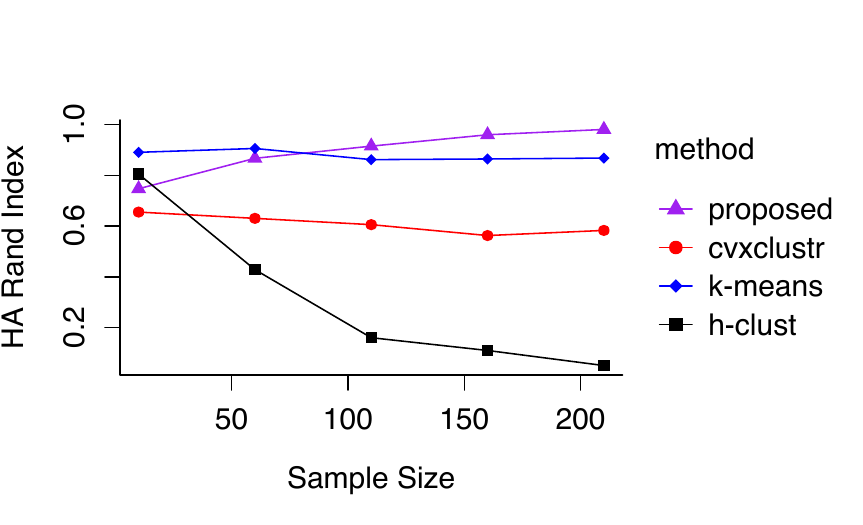}
\hspace{3mm}
\includegraphics[height=0.18\textheight,width=.45\textwidth]{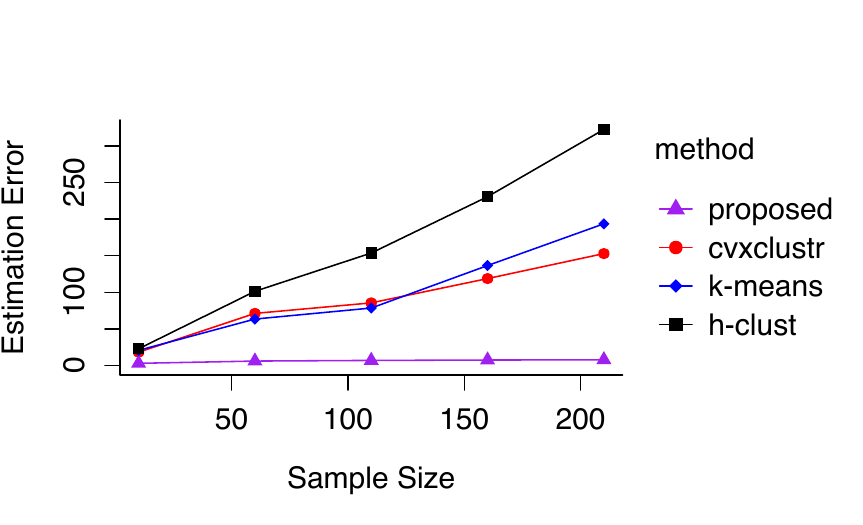}
}
\subfigure[varying feature dimensions, $n = 40$, row-wise contamination = $10\%$, $\tau=0.1$.]{
\includegraphics[height=0.18\textheight,width=.45\textwidth]{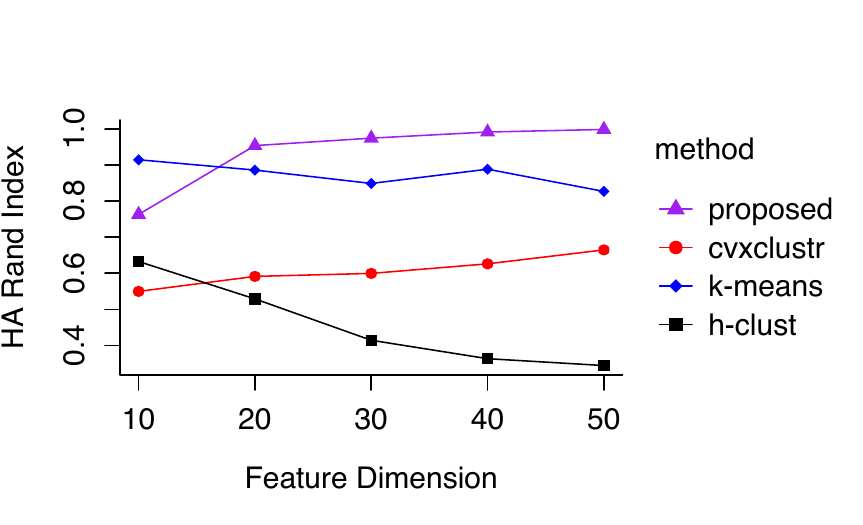}
\hspace{3mm}
\includegraphics[height=0.18\textheight,width=.45\textwidth]{pic/new_plots2/row-wise/normal_t_n40_varyp_r0.1_df1_tau0.1_error.pdf}
}
\subfigure[varying row-wise outlier proportions, $n = 40, p = 20$, $\tau=0.1$.]{
\includegraphics[height=0.18\textheight,width=.45\textwidth]{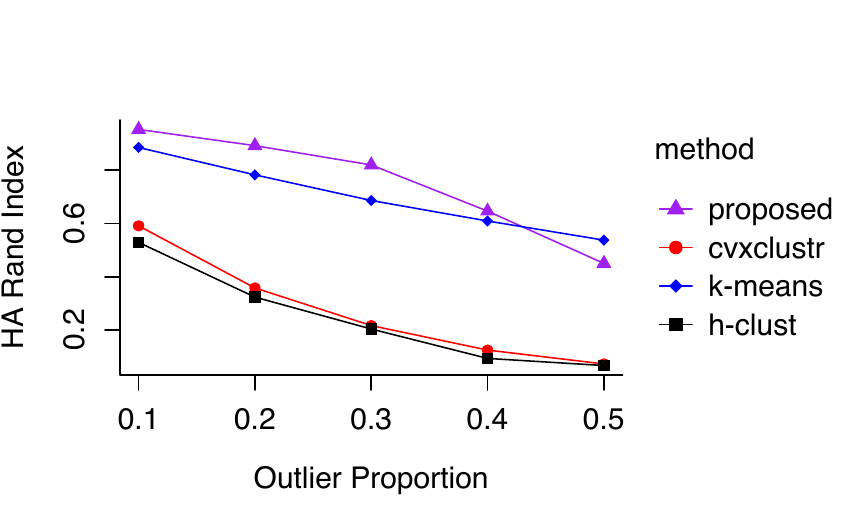}
\hspace{3mm}
\includegraphics[height=0.18\textheight,width=.45\textwidth]{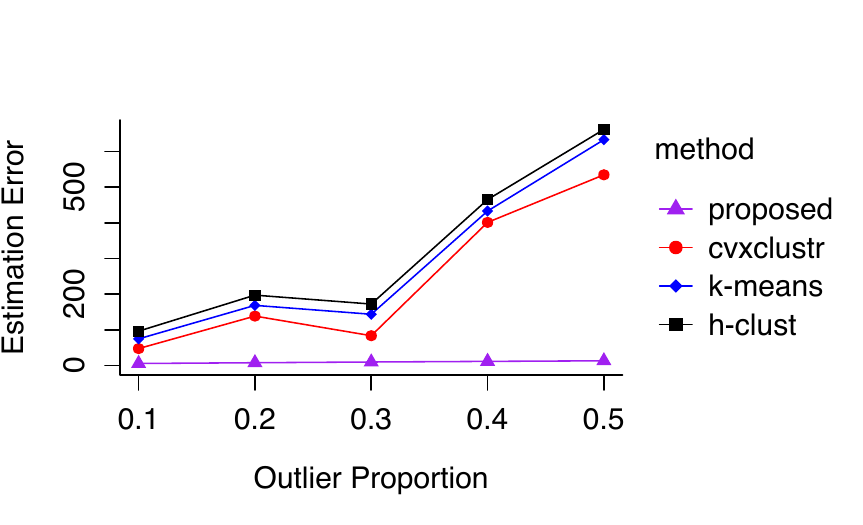}
}
\caption{Comparing our proposed method with others for data with Gaussian noise and $t$-outliers with 1 degree of freedom and row-wise contamination. The left panel shows the HA Rand index and the right panel collects the estimation error. 
In all panels, purple, red, blue, and black lines mark our proposed method, least-squares convex clustering, $k$-means, and hierarchical clustering respectively. 
}
\label{row-wise-3}
\end{figure}

\end{document}